 \definecolor{BLACK}{gray}{0}
 \definecolor{WHITE}{gray}{1}
 \definecolor{RED}{rgb}{1,0,0}
 \definecolor{GREEN}{rgb}{0,1,0}
 \definecolor{BLUE}{rgb}{0,0,1}
 \definecolor{CYAN}{cmyk}{1,0,0,0}
 \definecolor{MAGENTA}{cmyk}{0,1,0,0}
 \definecolor{YELLOW}{cmyk}{0,0,1,0}
\newtheorem{theorem}{Theorem}
\newtheorem{proposition}{Proposition}
\newtheorem{definition}{Definition}
\newtheorem{lemma}{Lemma}
\def\II{{\mathcal I}}
\def\OO{{\mathcal O}}
\def\NN{{\mathcal N}}
\def\LL{{\mathcal L}}
\newcommand{\og}{\overline{G}}
\newcommand{\x}{\bm{x}}
\newcommand{\y}{\bm{y}}
\newcommand{\z}{\bm{z}}
\begin{document}

\title{Quantum-Classical Separations in Shallow-Circuit-Based Learning with and without Noises}

\author{Zhihan Zhang}
\thanks{These authors contributed equally to this work.}
\affiliation{Center for Quantum Information, IIIS, Tsinghua University, Beijing 100084, China}
\author{Weiyuan Gong}
\thanks{These authors contributed equally to this work.}
\affiliation{Center for Quantum Information, IIIS, Tsinghua University, Beijing 100084, China}
\author{Weikang Li}
\affiliation{Center for Quantum Information, IIIS, Tsinghua University, Beijing 100084, China}
\author{Dong-Ling Deng}
\email{dldeng@tsinghua.edu.cn}
\affiliation{Center for Quantum Information, IIIS, Tsinghua University, Beijing 100084, China}
\affiliation{Shanghai Qi Zhi Institute, 41st Floor, AI Tower, No. 701 Yunjin Road, Xuhui District, Shanghai 200232, China}
\affiliation{Hefei National Laboratory, Hefei 230088, China}

\begin{abstract}
We study quantum-classical separations between classical and quantum supervised learning models based on constant depth (i.e., shallow) circuits, in scenarios with and without noises. We construct a classification problem defined by a noiseless shallow quantum circuit and rigorously prove that any classical neural network with bounded connectivity requires logarithmic depth to output correctly with a larger-than-exponentially-small probability. This \textit{unconditional near-optimal} quantum-classical separation originates from the quantum nonlocality property that distinguishes quantum circuits from their classical counterparts. We further derive the noise thresholds for demonstrating such a separation on near-term quantum devices under the depolarization noise model. We prove that this separation will persist if the noise strength is upper bounded by an inverse polynomial with respect to the system size, and vanish if the noise strength is greater than an inverse polylogarithmic function. In addition, for quantum devices with constant noise strength, we prove that \textit{no} super-polynomial classical-quantum separation exists for any classification task defined by shallow Clifford circuits, independent of the structures of the circuits that specify the learning models.  
\end{abstract}
\maketitle

Quantum machine learning studies the interplay between machine learning and quantum physics \cite{Biamonte2017Quantum,Sarma2019Machine,Arunachalam2017Guest,Dunjko2018Machine,Ciliberto2017Quantum,Carleo2019Machine}. In recent years, a number of quantum learning algorithms have been proposed~\cite{Harrow2009Quantum,Lloyd2014Quantum,Dunjko2016QuantumEnhanced,Amin2018Quantum,Gao2018Quantum,Lloyd2018Quantum,Hu2019Quantum,Schuld2019Quantum,Liu2021Rigorous,Huang2021Information,Rebentrost2014Quantum,Bouland2021Noise,Gao2022Enhancing}, which may offer potential quantum advantages over their classical counterparts.  However, most of these algorithms either depend on a complexity assumption for the rigorous proof of the quantum advantage~\cite{Liu2021Rigorous}, or are out of the limited capabilities for noisy near-term quantum devices~\cite{Harrow2009Quantum,Lloyd2014Quantum,Lloyd2018Quantum,Rebentrost2014Quantum}. On the other hand, the difficulty of constructing fault-tolerant and deep quantum circuits motivates the study of machine learning using constant-depth (shallow) quantum circuits. Most of these algorithms assume classical access to data and use variational circuits for learning. Notable examples along this direction include machine learning based on variational quantum algorithms~\cite{Cerezo2021Variational,Mitarai2018Quantum,Grant2018Hierarchical,Schuld2020Circuit,Farhi2014Quantum,Farhi2018Classification,Li2023Ensemblelearning,Havlivcek2019Supervised}, which train a parameterized shallow quantum circuit with classical optimizers. For these experimental-friendly algorithms, however, there is no rigorous proof showing that they have genuine advantages over classical algorithms. These two points raise the essential problem of whether we can rigorously prove an unconditional quantum advantage in machine learning feasible for near-term quantum devices~\cite{Preskill2018Quantum}. Here, we establish a rigorous quantum-classical separation for the supervised learning model with noiseless shallow quantum circuits (as shown in Fig.~\ref{fig:illu} (a)) and pin down the noise regimes where the quantum-classical separation exists or disappears for shallow-circuit-based learning.

\begin{figure*}[t]
    \includegraphics[width=0.99\textwidth]{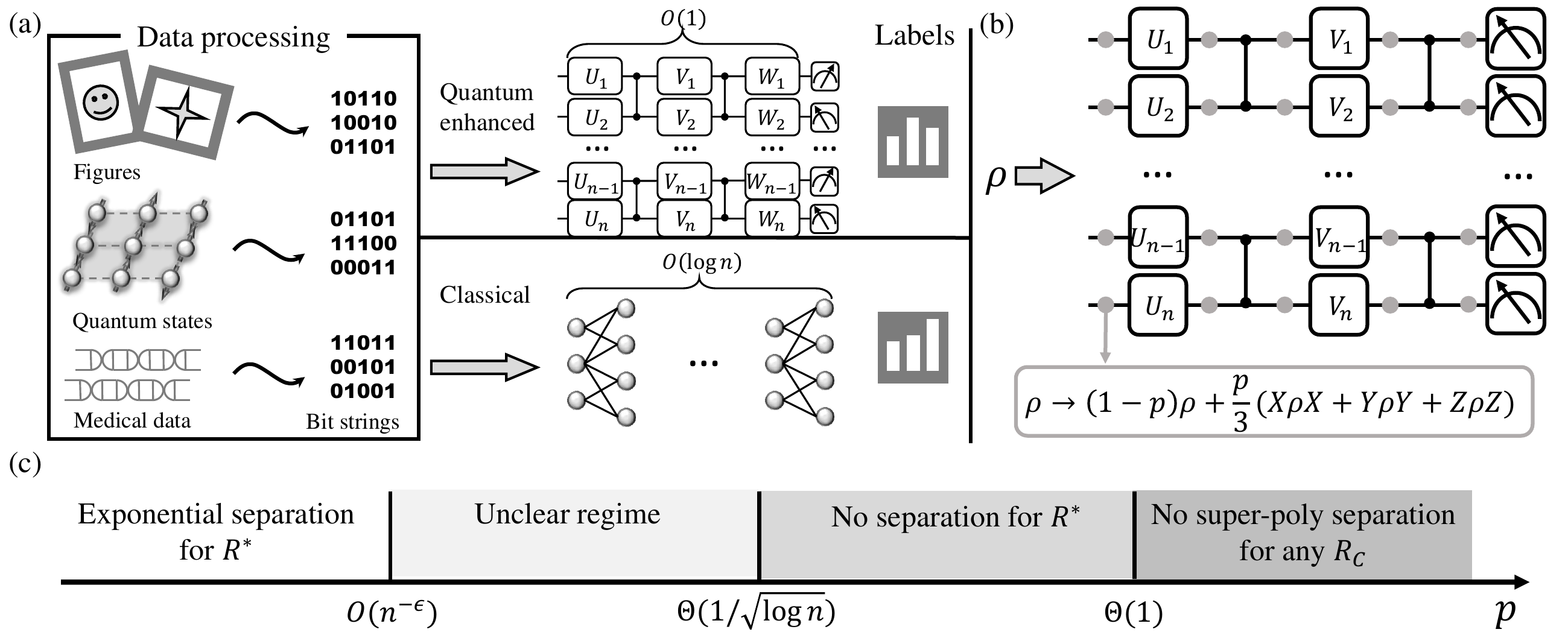}
    \caption{(a) An illustration of classical and quantum supervised learning models using shallow quantum circuits and neural networks with bounded connectivity. The data samples are first encoded into bit strings $\x$. These bit strings are input into a variational quantum circuit in the quantum-enhanced scenario and a classical neural network in the classical scenario. Each neuron in the classical neural network has bounded connectivity to the neurons in the previous layer. The outputs for two classifiers are distributions on all possible labels encoded in bit strings $\y$. (b) An implementation of a noisy quantum circuit, where an identical amount $p$ of depolarizing noise is added to each qubit at each step. (c) The characterization of noise rate regimes for the existence and absence of quantum-classical separation in learning classification tasks defined by shallow quantum circuits. Here, classification tasks $R^*$ and $R_C$ are relations defined in \Cref{thm:IdealSep} and \Cref{thm:NoisyBound}, respectively. The classification task $R^*$ can provide an exponential separation in the prediction accuracy when the noise rate is bounded above by $O(n^{-\epsilon})$, while such separation will vanish in the noise regime of $p>\Omega(1/\sqrt{\log n})$. In addition, any classification defined by a shallow Clifford circuit possesses no super-polynomial quantum-classical separation when the quantum circuit is implemented with a constant noise rate.}
    \label{fig:illu}
\end{figure*}

Recently, a rigorous and robust quantum advantage in supervised learning has been proved in Ref.~\cite{Liu2021Rigorous}, which relies on the assumption of the computational hardness in solving the discrete logarithm problem. Despite this exciting progress, an unconditional quantum-classical separation in learning is still lacking. In the quantum computing literature, a different paradigm to rigorously demonstrate quantum advantage was proposed in Ref.~\cite{Bravyi2018Quantum} to focus on shallow circuits, which is more feasible with noisy devices. It compared the circuit depth required for quantum and classical circuits to solve a relation problem, obtaining a separation originating from the classical hardness of simulating the intrinsic nonlocality property of quantum mechanics. In particular, it is proved that a shallow quantum circuit can solve a relation problem such that any classical circuits with bounded fan-in gates solving this problem with a moderate probability requires a logarithmic depth. Later works extended and amplified this result into an exponential separation in success probability~\cite{Gall2018Average,Watts2019Exponential,Coudron2021Trading,Aasnaess2022Comparing}. Other studies further showed that it is possible to preserve the separation from noises~\cite{Bravyi2020Quantum,Grier2021Interactive}. Very recently, a separation with similar intuition was proposed for sampling problems with shallow circuits~\cite{Watts2023Unconditional}. For classical neural networks, many convolution layers, ReLU layers, and pooling layers satisfy the local connectivity assumptions (i.e., bounded connectivity). Therefore, these unconditional separations provide the intuition that the nonlocality of quantum mechanics can provide an unconditional quantum advantage in machine learning tasks as well.

In this paper, we investigate the separations between shallow variational quantum circuits and classical neural networks for supervised learning tasks. For noiseless quantum devices, we prove that there exists a classification task defined by the input bits and output distributions of a shallow quantum circuit such that any classical neural network with bounded connectivity requires logarithmic depth to output correctly even with exponentially small probability $\exp(-O(n^{1-\epsilon}))$, where $\epsilon$ is a positive constant and $n$ is the system size. This indicates an \textit{unconditional near-optimal} separation between the representation power of shallow-circuit-based quantum classifiers and shallow-network-based classical neural networks. This quantum-classical separation originates from the classical hardness of simulating Bell nonlocality~\cite{Brunner2014Bell}. In addition, we consider a scenario with noises as well (Fig.~\ref{fig:illu} (b) and (c)). We show that when the noise rate is bounded above by $O(n^{-\epsilon})$, the quantum-classical separation will persist. On the contrary, if the noise rate exceeds $\Omega(1/\sqrt{\log n})$, this separation will no longer exist. In addition, we prove that any classification task defined by a noiseless shallow Clifford circuit provides \textit{no} super-polynomial quantum-classical separation if implemented on quantum devices with a constant noise rate.

\textit{Notations.---}Consider a generalized classification task in the context of supervised learning, where we assign multiple labels $\y_1,\cdots,\y_K\in\OO$ to an input data sample $\x\in\II$, with $\II$ and $\OO$ denoting the set of all possible samples and labels. Without loss of generality, we encode the samples and labels into bit strings of $n$ and $m$ bits, respectively. In essence, such a classification task can be described by a \textit{relation} $R:\{0,1\}^n\times\{0,1\}^m\to\{0,1\}$, such that a label $\y$ is a legitimate label for an input sample $\x$ if and only if $R(\x,\y)=1$. For a supervised learning algorithm, we can also describe the inputs and outputs through a relation (called a hypothesis relation) $R_h:\{0,1\}^n\times\{0,1\}^m\to\{0,1\}$. For an input sample $\x$, we assign $R_h(\x,\y)=1$ if $\y$ is a possible output for the learning algorithm and $R_h(\x,\y)=0$ otherwise. Given an input $\x$, the learning algorithm outputs a label $\y\in\{0,1\}^m$ with probability $\Pr[\y|\x,R_h]$. To evaluate the performance of a learning algorithm, we use a loss function defined as the probability of the learning algorithm outputting a wrong label averaged uniformly over all possible input strings $\x\in\{0,1\}^n$:
\begin{align}\label{eq:LossProb}
\LL_P(R,R_h)=1-\frac{1}{2^n}\sum_{\x,\y}\Pr[\y|\x,R_h]\mathbbm{1}_{R(\x,\y)=1},
\end{align}
where $\mathbbm{1}_{A}$ takes value $1$ when the argument $A$ is satisfied and $0$ otherwise. We say a relation can be represented by a neural network or a quantum circuit if the loss function can be reduced to zero by choosing a set of suitable variational parameters. 

\textit{Separations for noiseless quantum circuits.---}We compare the representation power of shallow quantum circuits and shallow classical neural networks equipped with neurons of bounded connectivity. We start with variational quantum circuits without noise. In particular, we prove the following theorem, which shows a near-optimal quantum-classical separation between these two types of classifiers.
\begin{theorem}\label{thm:IdealSep}
There exists a classification task described by a relation $R^*:\{0,1\}^n\times\{0,1\}^m\to\{0,1\}$ such that a constant-depth parametrized variational quantum circuit with single- and two-qubit gates  can represent this relation (i.e., with zero loss). However, any classical neural network with neurons of bounded connectivity that can represent a hypothesis relation $R_h$ with
\begin{align}
\LL_P(R^*,R_h)\leq 1-\exp(-\gamma n^{1-\epsilon})
\end{align}
for some constant $\gamma>0$ requires depth at least $\Omega(\epsilon\log n)$ for any positive constant $\epsilon<1$.
\end{theorem}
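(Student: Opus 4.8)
The plan is to build on the shallow-circuit separation paradigm of Ref.~\cite{Bravyi2018Quantum} and its probability-amplified extensions~\cite{Watts2019Exponential,Coudron2021Trading}, recasting the relation problem studied there into a classification relation $R^*$ and then translating the circuit-depth lower bound into the loss bound of \Cref{thm:IdealSep}. Concretely, I would define $R^*$ as a parallel composition of $N=\Theta(n)$ independent nonlocality gadgets (for instance GHZ or hidden-linear-function sub-tests embedded on a $2$D grid), where the input bits $\x$ specify the local measurement settings, i.e.\ the ``questions'', for every gadget, and a label $\y$ satisfies $R^*(\x,\y)=1$ iff each gadget's answer bits obey its winning parity constraint. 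The nonlocality of the gadgets is the engine of the separation: a perfect quantum strategy exists, whereas any local, bounded-lightcone classical assignment fails each gadget with constant probability.

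For the quantum upper bound I would exhibit the constant-depth variational circuit explicitly: prepare the relevant graph state on the grid, which is achievable in depth $O(1)$ using single- and two-qubit gates, then apply single-qubit rotations dictated by $\x$ and measure in the computational basis. Because the embedded gadgets admit a perfect quantum strategy, the measurement outcomes satisfy $R^*$ with certainty for every input, so the circuit drives $\LL_P(R^*,R_h)$ in \eqref{eq:LossProb} to zero and the relation is representable by the quantum classifier.

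The substance is the classical lower bound, which I would obtain by a lightcone argument combined with a Bell-type nonlocality bound. A classical neural network of depth $d$ with neurons of bounded connectivity $K$ has the property that every output neuron depends on at most $K^d$ input bits; choosing $d<c\,\epsilon\log n$ then forces each lightcone to have size below $n^{\epsilon}$. I would argue that a lightcone of size $n^{\epsilon}$ can ``connect'', and thereby allow the network to win, at most $O(n^{\epsilon})$ of the gadgets, leaving $\Omega(n/n^{\epsilon})=\Omega(n^{1-\epsilon})$ gadgets whose answer neurons read disjoint, geometrically separated question sets and hence implement a local hidden-variable strategy. Each such gadget is satisfied with probability at most $1-\Omega(1)$; since these surviving gadgets are independent and the network's behavior on them factorizes, the probability of simultaneously satisfying all the constraints of $R^*$ is at most $(1-\Omega(1))^{\Omega(n^{1-\epsilon})}=\exp(-\Omega(n^{1-\epsilon}))$. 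This is precisely the contrapositive of the claim: attaining $\LL_P(R^*,R_h)\le 1-\exp(-\gamma n^{1-\epsilon})$ requires $d=\Omega(\epsilon\log n)$.

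I expect the main obstacle to lie in the geometric and combinatorial bookkeeping of the classical step, namely engineering $R^*$ so that (i) winning a gadget genuinely requires correlating answer bits across a region larger than any admissible lightcone, so that small lightcones provably preclude a winning strategy, and (ii) the surviving gadgets are independent enough that their constant per-gadget failure probabilities \emph{multiply} into a clean $\exp(-\Omega(n^{1-\epsilon}))$ factor rather than merely summing. Making the per-gadget nonlocality bound robust to the partial information that a nonzero lightcone does supply, and handling randomized rather than deterministic networks by averaging over the network's internal randomness, will require a careful direct-product or parallel-repetition-style analysis in the spirit of~\cite{Watts2019Exponential}.
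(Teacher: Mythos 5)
Your overall strategy---graph-state nonlocality gadgets, a lightcone bound on bounded-connectivity networks, and a product argument to amplify a constant failure probability into $\exp(-\Omega(n^{1-\epsilon}))$---is the same paradigm the paper follows (it builds on Le Gall's construction rather than directly on Ref.~\cite{Bravyi2018Quantum}, but the engine is identical). However, two steps in your argument would fail as written. First, your gadgets cannot be constant-sized. A bounded-fan-in, constant-depth classical circuit can read \emph{all} question bits of an $O(1)$-size GHZ or HLF gadget and output a jointly computed winning answer; the local-hidden-variable bound only bites when the answer bits that must be correlated are geometrically farther apart than any admissible lightcone. This forces each elementary gadget to be a growing instance of size $n_e=n_e(n)$ whose \emph{single-copy} hardness (constant failure probability $\alpha$ against depth-$c\log n_e$ circuits) is itself the main nontrivial theorem---in the paper this is Le Gall's grid-of-boxes graph state on $\Theta(d^6)$ vertices, where long cycles defeat sublogarithmic lightcones. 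Your accounting of ``$\Theta(n)$ gadgets, of which $\Omega(n^{1-\epsilon})$ survive'' does not describe how the exponent actually arises.

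Second, and more seriously, the step ``the surviving gadgets are independent and the network's behavior on them factorizes'' is not true and cannot be assumed. A single depth-$d$ network acting on the concatenated input can share gates and internal randomness across copies, so per-gadget failure probabilities do not simply multiply; establishing anything of this form is precisely the content of a direct product theorem. The paper invokes Le Gall's repetition theorem, which only extracts $t'=t/(6m_en_e^c+2)$ effectively independent copies out of $t$, and the final exponent $n^{1-\epsilon}$ comes from tuning $t$ against $n_e$ (with $n=n_et$) to control this polynomial loss---not from a count of untouched gadgets. You do flag this as an expected obstacle at the end, which is the right instinct, but as proposed the quantitative core of your lower bound rests on exactly the step that is missing. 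You would also need to add the (easy but necessary) reduction from real-valued neurons with bounded connectivity to Boolean bounded-fan-in circuits, which the paper handles by noting that the lower bound only uses the restricted causal dependency structure and that randomized combinations cannot beat the best deterministic answer.
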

\begin{proof}
We provide the main idea here. The detailed proof is
technically involved and thus left to the Supplementary Materials. Specifically,we prove the equivalence of the classical circuit model and neural network model with bounded connectivity, and a proposition indicating that there exists an elementary classification task $R_e:\{0,1\}^{n_e}\times\{0,1\}^{m_e}\to\{0,1\}$ such that there exists a constant-depth quantum circuit that outputs a correct label $\y$ for any given sample $\x$ with unity probability. Whereas, any (randomized) classical neural network with bounded connectivity outputs correctly with probability at least $1-\alpha$ for some constant $\alpha$ requires depth $\Omega(\log n_e)$. We then exploit a direct product of $t$ copies of the relation $R_e$. We show that, by choosing large enough $t$, one can find a relation $R^*=R_e^{\times t}$ with $n=n_et$ and $m=m_et$ to amplify the quantum-classical loss function separation to $0$ versus $1-\exp(-O(n^{1-\epsilon}))$ for any positive constant $\epsilon<1$. This completes the proof for \Cref{thm:IdealSep}.
\end{proof}
The separation shown in \Cref{thm:IdealSep} can be attributed to the difficulty of simulating quantum nonlocality  using classical algorithms~\cite{Barrett2007Modeling}. More concretely, we consider supervised learning the measurement outcomes when measuring each qubit of a well-designed graph state~\cite{Hein2004Multparty} on the X-basis or the Y-basis. The graph states contain long cycles, requiring entanglement between distant sites in classical simulation. As a result, any classical neural network with bounded connectivity requires logarithmic depth to connect these distant wires. 
Similar ideas have been exploited to demonstrate the computational power separation between shallow classical and quantum circuits~\cite{Bravyi2018Quantum,Bravyi2020Quantum,Gall2018Average,Coudron2021Trading,Watts2019Exponential}. In particular, an analogous result for the special case $\epsilon=\frac{1}{2}$ is proved in~\cite{Gall2018Average}. 

\Cref{thm:IdealSep} shows an \textit{exponential} quantum-classical separation in the success probability of outputting a correct label for the classification task. This result indicates a quantum-classical separation in the representation power between classical neural networks with bounded connectivity and variational quantum circuit in supervised learning. Shallow quantum circuits are more powerful in representing complicated relations and functions. In addition, this separation also implies a constant-versus-logarithmic time separation in the inference stage of the classification task, between quantum classifiers based on constant-depth variational circuits and classical classifiers based on neural networks with bounded connectivity.

Furthermore, we rigorously prove that there exists a classical neural network of depth $c_0\log n$ with bounded connectivity and fine-tuned parameters that can reduce the loss function to $1-\exp(-O(n/\sqrt{\log n}))$ for any relation defined by a two-dimensional shallow quantum circuit (see Supplementary Materials Sec. III). Therefore, the quantum-classical separation shown in \Cref{thm:IdealSep} is near-optimal. In particular, to prove the optimality we explicitly propose an algorithm. This algorithm divides the input qubits into blocks of $O(\sqrt{\log n})\times O(\sqrt{\log n})$ qubits. We show that for shallow quantum circuits with single- and two-qubit gates, a classical simulation can output correctly if there are no inter-block gates. We then prove that the number of qubits in the input string affected by inter-block gates is bounded above by $O(n/\sqrt{\log n})$. Therefore, this classical simulation can output a correct string for all the qubits except the affected qubits. Finally, we apply a random guessing strategy on these affected qubits, and the probability of outputting a correct string is thus $\exp(-O(n/\sqrt{\log n}))$.

\Cref{thm:IdealSep} can be extended straightforwardly to loss functions other than that defined in Eq.~\eqref{eq:LossProb}. For example, we consider the loss function defined on $R^*$ and $R_h$ based on Kullback-Leibler (KL) divergence~\cite{Kullback1951Information}. In the Supplementary Materials, we show that relation $R^*$ in \Cref{thm:IdealSep} outputs according to an equal distribution on all possible strings $\y$. If we compute the KL divergence between $R^*$ and the probability distribution output from a learning algorithm, the loss function for the variational quantum circuit can be reduced to zero. However, any classical neural network with bounded connectivity and loss function smaller than $O(n^{1-\epsilon})$ requires depth $\Omega(\epsilon\log n)$. 

\textit{Noisy quantum circuits.---}Short-term quantum devices suffer from noises \cite{Preskill2018Quantum}. As a result, a question of both theoretical and experimental interest is whether the quantum-classical separation discussed above persists for noisy quantum circuits. It is shown that the existence of noise leads to the ineffectiveness of deep quantum circuits caused by the barren plateau phenomenon~\cite{Wang2021Can,Cerezo2021Cost,Wang2021Noise,Anschuetz2022Quantum}. Yet, whether shallow circuits lose their quantum advantage in the presence of noise remains largely unexplored. To address this important question, we focus on the depolarization noise model ~\cite{Nielsen2010Quantum} $\rho\rightarrow(1-p)\rho+\frac{p}{3}(X\rho X+Y\rho Y+Z\rho Z)$ for the shallow quantum circuit as an example. We prove that the variational quantum circuits in \Cref{thm:IdealSep} suffering from noise rate $p$  can only output correctly with probability $\exp(-\Theta(pn))$. Therefore, we conclude that the quantum-classical separation in \Cref{thm:IdealSep} remains when $p=O(n^{-\epsilon})$. However, the noise rate on most near-term quantum devices without any error mitigation or correction is typically assumed to be $p=O(1)$~\cite{Chen2022Complexity,Aharonov2022Polynomial,Preskill2018Quantum,Liu2021Benchmarking}. In such a scenario, we rigorously prove that the exponential separation in \Cref{thm:IdealSep} vanishes. In the above discussion, we have proposed a classical algorithm using logarithmic depth to output a correct label of a shallow quantum circuit with probability at least $\exp(-O(n/\sqrt{\log n}))$ when proving the near-optimality of the noiseless quantum-classical separation. By comparing this result and the performance bound for noisy shallow quantum circuits, the largest noise threshold for this quantum-classical separation will not exceed $p=O(1/\sqrt{\log n})$, which decreases as the system size increases. An illustrative figure for the noise regimes where the quantum-classical separation exists or disappears for shallow-circuit-based learning is provided as \Cref{fig:illu} (b). We leave the details for the derivation of upper and lower bound on the noise threshold for demonstrating \Cref{thm:IdealSep} on noisy devices in the Supplementary Materials.

We further find that the nonexistence of quantum-classical separation in the context of supervised learning using shallow circuits can be extended to more general cases. In particular, we derive the following theorem showing that there is no super-polynomial quantum-classical separation between noisy shallow variational quantum circuits and classical learning algorithms using shallow neural networks or circuits.  

\begin{theorem}\label{thm:NoisyBound}
For any classification task corresponding to a relation $R_C:\{0,1\}^n\times\{0,1\}^m\rightarrow\{0,1\}$ that can be defined  by a constant-depth (classically-controlled) Clifford circuit, no quantum circuit undergoing the depolarization noise $\rho\to(1-p)\rho+p(X\rho X+Y\rho Y+Z\rho Z)/3$ with constant noise rate $p$ can achieve a loss less than
\begin{align}
1-(1-\LL_P(R_C,R_r))^{\log_2\frac{1}{1-\frac{2}{3}p}},
\end{align}
where $R_r$ is the relation represented by a classical neural network that outputs uniformly randomly over $\y\in\{0,1\}^m$.
\end{theorem}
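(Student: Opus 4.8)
The plan is to bound the success probability of an arbitrary noisy circuit on each input separately and then average. First I would exploit the defining Clifford structure: for each input $\x$, running the noiseless (classically-controlled) Clifford circuit prepares a stabilizer state, so the set of legitimate labels $V_\x=\{\y:R_C(\x,\y)=1\}$ is an affine subspace of $\{0,1\}^m$, namely a coset of a linear code, with $|V_\x|=2^{m-k_\x}$ for some codimension $k_\x$. Consequently the random-guessing success is $s\equiv 1-\LL_P(R_C,R_r)=\tfrac{1}{2^n}\sum_\x 2^{-k_\x}$. Writing the projector onto $V_\x$ in the Pauli basis as $\Pi_\x=\tfrac{1}{2^{k_\x}}\sum_{S\in\mathcal{G}_\x}(-1)^{\phi_\x(S)}S$, where $\mathcal{G}_\x$ is the diagonal ($Z$-type) stabilizer subgroup of order $2^{k_\x}$ and $\phi_\x$ encodes the coset, the success probability of any circuit producing output state $\rho_\x$ on input $\x$ is $\mathrm{Tr}(\Pi_\x\rho_\x)$.

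Next I would inject the noise. Since every qubit passes through a depolarizing channel $\mathcal{E}$ immediately before the terminal measurement, I can write $\rho_\x=\mathcal{E}^{\otimes m}(\sigma_\x)$ for some otherwise unconstrained state $\sigma_\x$. Using self-duality of $\mathcal{E}$ and its action on Paulis, $\mathcal{E}^{\otimes m}(S)=\mu^{|S|}S$ with $\mu=1-\tfrac{4}{3}p$ and $|S|$ the weight of $S$, so $\mathrm{Tr}(S\rho_\x)=\mu^{|S|}\,\mathrm{Tr}(S\sigma_\x)$ with $|\mathrm{Tr}(S\sigma_\x)|\le 1$. Taking absolute values termwise kills the unknown signs and the coset phase, giving the input-wise bound
\begin{equation}
\mathrm{Tr}(\Pi_\x\rho_\x)\le \frac{1}{2^{k_\x}}\sum_{S\in\mathcal{G}_\x}\mu^{|S|}=\frac{1}{2^{k_\x}}\sum_{\bm{h}\in\mathcal{C}_\x}\mu^{|\bm{h}|},
\end{equation}
where $\mathcal{C}_\x\subseteq\mathbb{F}_2^m$ is the dimension-$k_\x$ binary code identifying $\mathcal{G}_\x$ and $|\bm{h}|$ is the Hamming weight. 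Crucially this holds for \emph{every} noisy circuit, not only the ideal Clifford one, because only the final noise layer and the bound $|\mathrm{Tr}(S\sigma_\x)|\le1$ are used.

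I would then prove the combinatorial core: for any $\mu\in[0,1]$ and any dimension-$k$ binary code $\mathcal{C}$, one has $\sum_{\bm{h}\in\mathcal{C}}\mu^{|\bm{h}|}\le(1+\mu)^{k}$. Choosing an information set of $k$ coordinates on which $\mathcal{C}$ projects bijectively onto $\mathbb{F}_2^k$, every codeword has weight at least that of its restriction to the information set, so $\mu^{|\bm{h}|}\le\mu^{|\bm{h}|_{\mathrm{info}}}$, and summing over the bijection yields $\sum_{\bm{h}}\mu^{|\bm{h}|}\le\sum_{\bm{a}\in\mathbb{F}_2^k}\mu^{|\bm{a}|}=(1+\mu)^k$. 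Plugging in gives the clean per-input estimate $\mathrm{Tr}(\Pi_\x\rho_\x)\le\big(\tfrac{1+\mu}{2}\big)^{k_\x}=\eta^{k_\x}$ with $\eta=1-\tfrac{2}{3}p$, equivalently $\eta^{k_\x}=(2^{-k_\x})^{\log_2(1/\eta)}$.

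Finally I would average over inputs. Setting $\beta=\log_2\frac{1}{1-\frac{2}{3}p}$, the total success obeys $\tfrac{1}{2^n}\sum_\x\mathrm{Tr}(\Pi_\x\rho_\x)\le\tfrac{1}{2^n}\sum_\x(2^{-k_\x})^{\beta}$. Since $p\le 3/4$ forces $\eta\ge1/2$ and hence $\beta\le1$, the map $t\mapsto t^{\beta}$ is concave, so Jensen's inequality gives $\tfrac{1}{2^n}\sum_\x(2^{-k_\x})^{\beta}\le\big(\tfrac{1}{2^n}\sum_\x 2^{-k_\x}\big)^{\beta}=s^{\beta}$. Therefore the loss is at least $1-s^{\beta}=1-(1-\LL_P(R_C,R_r))^{\log_2\frac{1}{1-\frac{2}{3}p}}$, which is the claim. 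I expect the main obstacle to be the weight-enumerator inequality together with the argument that it applies to \emph{arbitrary} noisy circuits: one must verify that the coset-projector's Pauli coefficients match the signs carried by the ideal stabilizer state so that the bound survives after replacing $|\mathrm{Tr}(S\sigma_\x)|$ by $1$, and one must confirm that $\beta\le1$ throughout the physical noise range $p\le 3/4$ so that the concluding Jensen step is licensed.
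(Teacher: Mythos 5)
Your proof is correct and reaches the same per-input bound $(1-\tfrac{2}{3}p)^{k_\x}$ and the same Jensen/power-mean averaging as the paper, but the route to the per-input bound is genuinely different. The paper's argument is probabilistic: it commutes the final layer of depolarizing channels on the distinguished output bits $\y_S$ past the measurement, turning each into a binary symmetric channel with crossover probability $2p/3$, and then observes that, conditioned on all other randomness, exactly one flip pattern on those $r(\x)$ bits yields a legitimate label, so the success probability is at most the largest single-pattern probability $(1-\tfrac{2}{3}p)^{r(\x)}$. You instead expand the coset projector $\Pi_\x$ in $Z$-type Paulis, use self-adjointness and the contraction $\mathcal{E}^{\otimes m}(S)=(1-\tfrac{4}{3}p)^{|S|}S$, bound $|\mathrm{Tr}(S\sigma_\x)|\le 1$, and close with the weight-enumerator inequality $\sum_{\bm h\in\mathcal{C}}\mu^{|\bm h|}\le(1+\mu)^{k}$ via an information set --- your information set is exactly the paper's choice of $r(\x)$ linearly independent columns of $C(\x)$, i.e.\ the subset $\y_S$. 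Your version has the advantage of making explicit that the bound holds for an \emph{arbitrary} pre-noise state $\sigma_\x$, so the result applies to any noisy circuit and not just the defining Clifford one (the paper's conditioning argument also achieves this but less transparently); the paper's version is more elementary and avoids the Pauli/Fourier machinery. One small remark: your closing worry about matching the coset-projector's signs to those of the ideal state is moot, since taking absolute values termwise is valid regardless of the signs; the only condition you actually need is $\mu=1-\tfrac{4}{3}p\ge 0$, i.e.\ $p\le\tfrac{3}{4}$, which is the same restriction the paper invokes for the exponent $\log_2\frac{1}{1-2p/3}\le 1$.
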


\begin{proof}
We sketch the major steps here and leave the technical details to Supplemental Material Sec. IV. We adapt a lemma from Ref.~\cite{Dehaene2003Clifford,Nest2008Classical} and prove that the output of the quantum circuit defining $R_C$ uniformly distributes over an affine subspace of $\mathbb{Z}_2^m$. Therefore, there exists a subset $\y_S$ of output bits such that when the input bits and the output bits outside of $\y_S$ are fixed, only one assignment of the bits in $\y_S$ can satisfy $R_C$, meaning that the probability of success for a classical random guessing algorithm is $(\frac{1}{2})^{|\y_S|}$. Similarly, for noisy quantum circuits, we consider the noise in the layer exactly before measuring the bits in $\y_S$ conditioned on all other noises. We conclude that the output label for a noisy quantum circuit for $R_C$ is correct with probability at most $(1-\frac{2}{3}p)^{|\y_S|}$. This completes the proof for this theorem.
\end{proof}

We recall that the shallow variational quantum circuit in \Cref{thm:IdealSep} contains non-Clifford gates. However, it can be proved that this classification task can also be represented by a classically-controlled Clifford quantum circuit~\cite{Gall2018Average}. Thus, the result in \Cref{thm:NoisyBound} also works for the classification task in \Cref{thm:IdealSep}. As $p$ is a constant, $\log_2(1/(1-2p/3))$ is a constant. Therefore, \Cref{thm:NoisyBound} shows that the success probability for a noisy shallow quantum circuit to output a correct label for any classification task defined by a noiseless shallow Clifford circuit scale at most polynomially as that of a classical algorithm that randomly guesses the labels. Since random guesses can be accomplished by a classical neural network of constant depth, the separation in \Cref{thm:IdealSep} does not exist here. \Cref{thm:NoisyBound} further indicates that we cannot expect a near-term experimental demonstration of quantum advantage only using classification tasks characterized by a shallow Clifford circuit.

It is worthwhile to remark that the relation considered in \Cref{thm:NoisyBound} can be defined by a shallow noiseless Clifford quantum circuit, which is different from the relations considered in Ref.~\cite{Bravyi2020Quantum,Grier2021Interactive}. There, an implicit error correction technique is introduced to guarantee that the quantum-classical separation persists for noisy circuits. Such technique takes a problem with a separation between classical circuits and noiseless quantum circuits and encodes the problem in error-correcting codes. Then the error syndromes are measured without carrying out the error correction and the classical circuit is required to do the same. This yields a separation between shallow classical circuits and noisy quantum circuits of constant noise rate. Therefore, the resulting relation can be represented but not defined by a shallow quantum circuit as the quantum circuit is allowed to make a mistake with a moderate probability. Whereas, in our scenario, the output distribution of the relation is assumed to be exactly the same with some quantum circuits, yielding a more restricted class of relations. We conclude that this kind of restriction is vital for the nonexistence of the separation in Ref.~\cite{Bravyi2020Quantum,Grier2021Interactive}. 

We also clarify the differences between \Cref{thm:NoisyBound} and the recent results in Ref.~\cite{Aharonov2022Polynomial} in the absence of quantum advantage on constant-rate noisy random circuit sampling. Ref.~\cite{Aharonov2022Polynomial} considered the task aiming at approximating the output distribution of a random quantum circuit within a small total variational difference. It is widely believed that classically simulating the output distributions of a (noisy) random circuit of logarithmic depth exactly is difficult~\cite{Aaronson2011Computational,Bouland2021Noise,Boixo2018Characterizing,Bouland2019Complexity,Movassagh2019Quantum,Aaronson2017Complexity,Aaronson2020Classical}, but Ref.~\cite{Aharonov2022Polynomial} explicitly propose an efficient algorithm to approximate the outputs distribution of noisy random circuits with polynomial complexity. While this result gives evidence against the quantum advantage of random circuit sampling with a constant noise rate, \Cref{thm:NoisyBound} focus on the classification task represented by quantum circuits of depth $O(1)$. The random circuits are assumed to satisfy the anti-concentration condition, which requires a circuit depth of $\Omega(\log n)$, while we consider shallow circuits of constant depth. It is hard to classically simulate the result of random circuits, but the output distribution of the shallow quantum circuit considered in this paper can be efficiently simulated by classical algorithms~\cite{Napp2022Efficient,Wang2022Possibilistic}. To construct a quantum-classical separation in this regime, we compare the representation power of shallow quantum circuits to that of shallow classical neural networks and show the quantum-classical separation. We prove that classification tasks defined by shallow Clifford circuits can not demonstrate any super-polynomial quantum advantage on near-term devices with constant noise rates.

\textit{Outlook.---}Many questions remain and are worth further investigation. For instance, our discussion in this paper focused on the relation exactly defined by a shallow quantum circuit. For quantum devices with a constant noisy rate, it is still possible to find a quantum-classical separation in the success probability for relations not exactly defined by a quantum circuit. How to explicitly construct such a hard classification task to rigorously obtain a noise-tolerant quantum-classical separation without relying on quantum error correction is crucial for demonstrating quantum advantages with near-term quantum devices. In practice, many classical neural networks bear unbounded connectivity. How to extend our results to this scenario demands further exploration. In addition, how to extend our results to unsupervised learning and reinforcement learning scenarios remains unclear. Whether we can transfer our separation in representation power to a separation in sample complexity or computational complexity is an important future direction as well. Finally, it would be of both fundamental and practical importance to carry out an experiment to demonstrate the unconditional quantum-classical separation proved in \Cref{thm:IdealSep}. This would be a far-reaching step towards practical quantum learning supremacy in the future. 

\textit{Acknowledgement}.---We thank Sitan Chen, Tongyang Li, Dong Yuan, Zidu Liu, Qi Ye, Xingjian Li, Zhide Lu, and Xun Gao for helpful discussions. This work is supported by the National Natural Science Foundation of China (Grants No. 12075128 and No. T2225008) and the Shanghai Qi Zhi Institute.

\bibliographystyle{apsrev4-1-title}
\bibliography{QAdvSuperLearn}

\clearpage
\newpage 
\onecolumngrid
\makeatletter
\setcounter{figure}{0}
\setcounter{table}{0}
\setcounter{equation}{0}
\setcounter{theorem}{0}
\setcounter{lemma}{0}
\setcounter{definition}{0}
\renewcommand{\thefigure}{S\@arabic\c@figure}
\renewcommand{\thetable}{S\@arabic\c@table}
\renewcommand{\theequation}{S\@arabic\c@equation}
\renewcommand{\thetheorem}{S\@arabic\c@theorem}
\renewcommand{\thelemma}{S\@arabic\c@lemma}
\renewcommand{\thedefinition}{S\@arabic\c@definition}

\begin{center} 
    {\large \bf Supplementary Materials for:\\ Quantum-Classical Separations in Shallow-Circuit-Based Learning with and without Noise}
\end{center} 

In the Supplementary Materials, we provide more details about the proofs for the two theorems in the main text. We provide supplementary notes on the variational quantum circuit and its applications in classification. For the noiseless quantum-classical separation demonstrated in Theorem 1 in the main text, we derive the noise rate regime for this separation to persist or vanish.

\section{Variational Quantum Circuits and Classifiers}\label{sec:VQC}
In this section, we provide a brief introduction to variational quantum circuits and their applications in classification tasks.
As indicated by the name, variational quantum circuits are quantum circuits that include variational parameters inside them \cite{Cerezo2021Variational}.
For example, it is popular to use the rotation angles of some single-qubit rotation gates to encode the variational parameters as shown below:
$$
\begin{quantikz}
& \gate{R_x(\theta)} & \qw
\end{quantikz}
= e^{-i \frac{\theta}{2} \hat{\sigma}_x},\;
\begin{quantikz}
& \gate{R_y(\theta)} & \qw
\end{quantikz}
= e^{-i \frac{\theta}{2} \hat{\sigma}_y},\;
\begin{quantikz}
& \gate{R_z(\theta)} & \qw
\end{quantikz}
= e^{-i \frac{\theta}{2} \hat{\sigma}_z}.
$$
Besides, in quantum approximate optimization algorithms, the evolution time of the fixed Hamiltonian can also be utilized as a variational parameter.
To entangle different qubits, we can use two-qubit controlled gates (such as controlled-NOT gates and controlled-Z gates), the time evolution of a multi-qubit Hamiltonian, and many other alternatives.
For practical tasks, the choice of the circuit ansatz should be made according to the experimental condition and the specific task.
Considering the limited size and non-negligible experimental noise, algorithms based on variational quantum circuits are suitable candidates for demonstrating early-stage applications on near-term quantum devices \cite{Preskill2018Quantum}.

\begin{figure*}[h]
    \centering
    \includegraphics[width=0.95\textwidth]{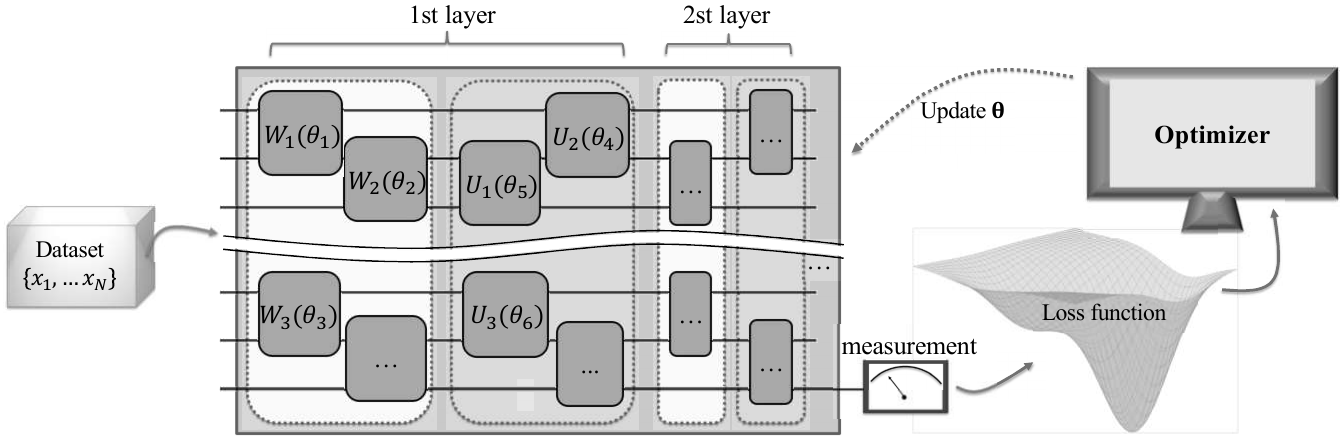}
    \caption{Schematics of supervised learning with variational quantum circuits. After feeding the training samples into the model and obtaining the outputs, the optimizer will update the variational parameters in order to minimize the loss function.}
    \label{fig:vqc}
\end{figure*}

We can utilize variational quantum circuits to handle some optimization-based tasks. Here, we focus on classification tasks. First, in the supervised learning setting, there is a training dataset with size $M$
\begin{equation}
\mathcal{D}=\left\{\left({\x}_1, y_1\right),\left({\x}_2, y_2\right), \ldots,\left({\x}_M, y_M\right)\right\},
\end{equation}
which our variational quantum circuit will learn from.
For concreteness, before starting the learning procedure, we need to define the way in which we encode the training data $\x_i$ into the variational quantum circuit model and the way in which we read the output. Then, to learn from the dataset $\mathcal{D}$, what we do is to feed the samples $(\x_i,y_i)$ into the model and tune the variational parameters such that the output is close to the label $y_i$.

To achieve this, we usually need a loss function to measure the distance between the current output value and the ground truth label. Let $h(\x;\boldsymbol{\theta})$ be the hypothesis function, where $\x$ and $\boldsymbol{\theta}$ denote the input and the variational parameters, respectively.
Then, we use $\mathcal{L}(h(\x; \boldsymbol{\theta}), y)$ to denote the loss function. $\mathcal{L}(h({\x}; \boldsymbol{\theta}), y)$ will be small if $h(\x; \boldsymbol{\theta})$ is close to $y$ and large otherwise.
For example, suppose we write the output value and the label in the form of one-hot encoding $\mathbf{g}$ and $\mathbf{a}$, respectively.
Two commonly used loss functions, i.e., the mean squared error and cross-entropy, can be expressed as
\begin{equation}
\mathcal{L}_{MSE}(\mathbf{g}, \mathbf{a})  =\sum_k\left(a_k-g_k\right)^2,
\end{equation}
\begin{equation}
\mathcal{L}_{CE}(\mathbf{g}, \mathbf{a})  =-\sum_k a_k \log g_k.
\end{equation}
Thus, the learning task can be converted to the minimization task for the loss function.
To this end, various gradient-descent-based and gradient-free methods have been proposed \cite{Mitarai2018Quantum,Havlivcek2019Supervised}. The basic procedure is summarized as shown in Fig.~\ref{fig:vqc}.
After the training process, the variational quantum classifier can be utilized to make predictions on new samples.

In this work, we consider a more general case of classification tasks.
As mentioned in the main text,
we assign multiple labels $\y_1,\cdots,\y_K\in\OO$ to an input data sample $\x\in\II$, and without loss of generality, we can encode the samples and the labels into bit strings of $n$ and $m$ bits, respectively.
In this scenario, 
we can use a relation $R:\{0,1\}^n\times\{0,1\}^m\to\{0,1\}$ to describe the classification criterion:
a label $\y$ is legitimate for an input sample $\x$ if and only if $R(\x,\y)=1$. 
In supervised learning,
we can also use a relation, i.e., a hypothesis relation $R_h:\{0,1\}^n\times\{0,1\}^m\to\{0,1\}$, to describe the input-output pairs.
More concretely, given an input sample $\x$, we assign $R_h(\x,\y)=1$ to a possible output $\y$ for the model and $R_h(\x,\y)=0$ otherwise.
In this generalized case, we can accordingly change the loss function, e.g., we can directly use the probability of the model making a wrong prediction (defined in Eq.~(1) in the main text)
or use the Kullback-Leibler (KL) divergence (defined in Eq.~\ref{eq:LossKL}).

\section{Proof and Extensions of Theorem 1}\label{sec:PfThm1}
Here, we provide the proof for Theorem 1 in the main text and its extensions to different loss functions. In addition to the ones in the main text, we first give more
notations and definitions to describe the construction for the relation $R^*$ in Theorem 1. We consider a graph state~\cite{Hein2004Multparty} $\ket{G}$ defined on a graph $G(V,E)$, where $V$ is the set of vertex and $E$ is the set of edges. It is a $\abs{V}$-qubit state defined by the following process. We start with the initial state $\otimes_{u\in V}\ket{0}_{Q_u}$, where $Q_u$ represents the single-qubit register for the qubit associated with vertex $u$. We then apply a Hadamard gate on each register and a control-Z gate on $(Q_u,Q_v)$ for each pair of adjacent vertices $(u,v)\in E$. This graph state is stabilized under operators (i.e., unchanged if we apply these operators)
\begin{align}
\pi_u=X_u\times\bigotimes_{v\in N(u)}Z_v,
\end{align}
where $X_u$ is the Pauli-X operator applied to the register $Q_u$, $Z_v$ is the Pauli-Z operator applied to the register $Q_v$, and $N(u)$ is the set of all adjacent vertices of $u$. 

Given an undirected graph $G(V,E)$, we denote $\overline{G}$ the extended graph of $G$ with $\abs{V}+\abs{E}$ vertices and $2\abs{E}$ edges obtained from inserting a vertex at each edge of $G$. We denote $V^*$ the set of the inserted vertices so the vertex set of $\overline{G}$ is $V\cup V^*$. Now, we are ready to recap the construction of the elementary relation $R_e:\{0,1\}^{n_e}\times\{0,1\}^{m_e}\to\{0,1\}$ in Ref.~\cite{Gall2018Average}. 

Given an even positive integer $d$, we construct a graph $G_d$ with vertices set $V_d=V_1\cup V_2$ defined as shown in Fig. 4 of Ref.~\cite{Gall2018Average}. We consider a $d^3\times d^3$ square grid and denote the set of vertices as $V_1$. We can observe that $\abs{V_1}=\Theta(d^6)$. The vertices are composed of $d^4$ identical and contiguous square regions each containing a $d\times d$ grid. We call such a $d\times d$ region a box. Next, we place a vertex at the center of each $1\times 1$ grid in the boxes. We connect each added vertex with the four corners of the corresponding grid. We denote $V_2$ as the set of the new vertices added and $\abs{V_2}=d^4(d-1)^2=\Theta(d^6)$. We label each vertex by $u_{ij}$ for $i,j\in\{1,...,k\}$, where $k=d^2(d-1)$, the index $j$ represents the vertical position, and the index $i$ represents the horizontal position. This completes the description of $G_d$.

The graph $\og_d$ is the extended graph of $G_d$. Let $V^*$ denote the inserted vertices and $\abs{V^*}=2d^3(d^3-1)+4d^4(d-1)^2$. Denote $\overline{V}_d=V_1\cup V_2\cup V^*$ to be the set of all vertices in $\og_d$ and let
\begin{align}
m_e=\abs{\overline{V}_d}=\Theta(d^6).
\end{align}

We then define the relation $R_e$ based on the graph state on $\og_d$. Given a matrix $A\in\{0,1\}^{k\times k}$, we consider the process $P_d(A)$ described as follows. We first construct the graph state over $\og_d$. For each $u_{ij}\in V_2$, we measure the $Q_{u_{ij}}$ in the $X$-basis if $A_{ij}=0$ and in the $Y$-basis if $A_{ij}=1$. For each vertex $u\in V_1\cup V^*$, we measure the $Q_u$ in the $X$-basis. In this process, we perform a measurement on each node of $\og_d$, outputting a single classical bit each. We represent the output by a bit string of length $m_e$ using an arbitrary ordering of the ${m_e}$ nodes of $\og_d$. We denote $\Lambda_d(A)\subseteq\{0,1\}^{n_e}$ to be the set of all the strings that occur with non-zero probability in $P_d(A)$. The formal definition of the relation is given below.
\begin{definition}\label{def:GallThm}
For any even positive integer $d$ and a given matrix $A\in\{0,1\}^{k\times k}$ as input, where $k=d^2(d-1)$, we compute a string from $\Lambda_d(A)$. Since $\abs{\Lambda_d(A)}>1$, there is more than one valid output. For any input $A\in\{0,1\}^{k\times k}$ and output $\z\in\{0,1\}^{m_e}$ the relation $R_e(A,\z)=1$ for the following set
\begin{align}
\{(A,\z)|A\in\{0,1\}^{k\times k} \text{ and } \z\in\Lambda_d(A)\}\subseteq\{0,1\}^{k\times k}\times\{0,1\}^{m_e}.
\end{align}
By setting $n_e=k^2$ and rewrite $\{0,1\}^{k\times k}$ with $\{0,1\}^{n_e}$, we can regard $R_e$ as a subset of $\{0,1\}^{n_e}\times\{0,1\}^{m_e}\to\{0,1\}$.
\end{definition}

As long as $3m_e^{1/7}<d-2$, the following theorem holds.
\begin{lemma}[Theorem 2, Ref.~\cite{Gall2018Average}]\label{thm:GallThm}
There exists a relation $R_e:\{0,1\}^{n_e}\times\{0,1\}^{m_e}\to\{0,1\}$ for which the following two assertions hold
\begin{itemize}
    \item A constant-depth quantum circuit with single- and two-qubit gates can output a string $\y\in\{0,1\}^{m_e}$ for any input $\x\in\{0,1\}^{n_e}$ with probability $1$ such that $R_e(\x,\y)=1$.
    \item There exists a constant $\alpha>0$ such that any classical circuit $C$ with bounded fan-in gates satisfying
    \begin{align}
    \frac{1}{2^{n_e}}\sum_{\x\in\{0,1\}^{n_e}} \Pr[R_e(\x,C(\x))=1]\geq1-\alpha
    \end{align}
    has depth $\Omega(\log n_e)$.
\end{itemize}
\end{lemma}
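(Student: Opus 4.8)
The plan is to prove the two assertions separately, with the quantum upper bound essentially immediate and the classical lower bound carrying all the difficulty. For the quantum part I would first note that $\og_d$ has bounded degree: the original vertices of $G_d$ have constant degree and every inserted vertex of $V^*$ has degree exactly two. Consequently the controlled-$Z$ gates that build the graph state $\ket{\og_d}$ split into a constant number of perfect matchings by edge-coloring (Vizing's theorem), and the gates within a matching act on disjoint qubits and commute, so they execute in parallel. Preceded by one layer of Hadamards and followed by a single layer of measurements — in the $X$- or $Y$-basis on $V_2$ as dictated by $A$, and in the $X$-basis on $V_1\cup V^*$ — this is a depth-$O(1)$ circuit of single- and two-qubit gates whose output string lies in $\Lambda_d(A)$ by construction, so that $R_e(A,\y)=1$ deterministically. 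This gives the first bullet.

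For the classical lower bound I would follow the nonlocality-based strategy of Bravyi--Gosset--K\"onig as adapted by Le Gall. The key structural observation is that producing a valid output of $R_e$ is equivalent to simultaneously winning a family of GHZ-type nonlocal games embedded along the cycles of $\og_d$: the $X/Y$ measurement settings (selected by the entries $A_{ij}$) play the role of the players' inputs, the outcome bits play the role of their answers, and the stabilizer parity relations coming from products of the generators $\pi_u=X_u\bigotimes_{v\in N(u)}Z_v$ around the cycles furnish the winning conditions. The graph state with local $X/Y$ measurements is a perfect quantum strategy for all these games at once (another way of seeing the first bullet), whereas any non-communicating local-hidden-variable strategy must fail a constant fraction of them by the standard GHZ/Bell argument.

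To convert this into a depth lower bound I would invoke the light cone of $C$: a depth-$\delta$ circuit with fan-in at most $K$ has each output bit depending on at most $K^{\delta}$ input bits. If $\delta=o(\log n_e)$ then $K^{\delta}=n_e^{o(1)}$, so for a typical embedded game the answer bits have light cones that miss the settings of the geometrically distant players sharing the same cycle. A circuit with this property acts, on each such game, exactly like a no-communication local-hidden-variable strategy and hence cannot win a constant fraction of the embedded GHZ games. Averaging over the uniformly random input $A$, the expected fraction of violated constraints is bounded below by a positive constant, so the overall success probability is at most $1-\alpha$ for some constant $\alpha>0$, contradicting the hypothesis; therefore $\delta=\Omega(\log n_e)$.

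The hard part will be making the ``bounded light cone $\Rightarrow$ effectively local-hidden-variable'' reduction quantitative and uniform. One must choose the grid dimension $d$ (whence the condition $3m_e^{1/7}<d-2$) so that enough edge-disjoint long cycles survive, verify that a light cone of size $K^{\delta}$ genuinely separates the relevant players for a constant fraction of the embedded games, and then combine the per-game Bell violations into a constant lower bound on the total error through a careful union/averaging argument that controls the correlations introduced by overlapping light cones. These estimates are the technically involved core of Ref.~\cite{Gall2018Average}, which I would invoke rather than reproduce.
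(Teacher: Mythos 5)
Your proposal matches the paper's treatment of this lemma: the paper likewise establishes the quantum upper bound via the constant maximum degree of $\og_d$ (hence constant-depth graph-state preparation plus single-qubit measurements) and attributes the classical lower bound to the light-cone/nonlocality argument, deferring the quantitative details to Ref.~\cite{Gall2018Average} exactly as you do. Your additional remarks (Vizing edge-coloring, the GHZ-game framing of the cycle parity constraints) are consistent elaborations of the same route, so there is no substantive difference.
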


The proof of the theorem is given in Ref.~\cite{Gall2018Average}. We only recap the intuition here. In the quantum case, the computational problem in \Cref{def:GallThm} can be solved by directly implementing the process $P_d(A)$, which can be done by a constant-depth quantum circuit as $\og_d$ has a constant maximum degree. 

For any classical circuit of sub-logarithmic depth with bounded fan-in gates, the light cones for each bit of the output for the circuit are limited by a constant. Simulating the outputs for the measurement on a graph state requires long-range interactions. However, the limited size of light cones makes a large fraction of the bit pairs in $A$ not related to each other with high probability. Therefore, the circuit cannot output a string in $\Lambda_d(A)$ with a high probability for a non-negligible fraction of the inputs $A$.

We now consider an amplification of the elementary relation $R_e$.  For any relation $R_e\subseteq\{0,1\}^{n_e}\times\{0,1\}^{m_e}\to\{0,1\}$ with positive integers $n_e$ and $m_e$, we construct the following relation. Given $t>0$ input strings $\x_1,\ldots,\x_t\in\{0,1\}^{n_e}$, the relation outputs one element from the set $R_e(\x_1)\times\ldots\times R_e(\x_t)$. This new relation problem corresponds to the direct product of $t$ copies of the original relation $R_e$. For simplicity, we denote the relation $R_e^{\times t}:\{0,1\}^{n_et}\times\{0,1\}^{m_et}\to\{0,1\}$. If $R_e$ cannot be solved with an averaged success probability larger than $1-\alpha$ using sub-logarithmic classical circuits, $R_e^{\times t}$ cannot be solved with an averaged success probability larger than $(1-\alpha)^{t'}$ for some $t'<t$ by any circuit of the same depth. The core idea is to extract at least $t'$ copies of $R_e$ on which the circuit acts independently from $R_e^{\times t}$. The formal result is the following repetition theorem from Ref.~\cite{Gall2018Average}.
\begin{lemma}[Theorem 10, Ref.~\cite{Gall2018Average}]\label{thm:GallRepThm}
Let $R_e\subseteq\{0,1\}^{n_e}\times\{0,1\}^{m_e}\to\{0,1\}$ be a relation for which the following property holds for some $c>0$ and $\alpha\in[0,1]$: any $n_e$-input $m_e$-output (randomized) classical circuit $C$ with bounded fan-in gates and depth $c\log n_e$ can success with probability at most
\begin{align}
\frac{1}{2^m}\sum_{\x\in\{0,1\}^{n_e}}\Pr[R_e(\x,C(\x))=1]<1-\alpha.
\end{align}
Let $t\geq 6m_en_e^c+2$ be a positive integer. Then any $(mt)$-input $(nt)$-output (randomized) classical circuit $C'$ with bounded fan-in gates and depth $c\log n_e=O(\log (n/t)), n=n_et$ satisfies
\begin{align}
\frac{1}{2^{n_et}}\sum_{\x'\in\{0,1\}^{n_et}}\Pr[R_e^{\times t}(\x',C'(\x'))=1]<(1-\alpha)^{t/(6m_en_e^c+2)}.
\end{align}
\end{lemma}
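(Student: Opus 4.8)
The plan is to reduce the multi-copy statement to the single-copy hardness hypothesis by isolating a large collection of copies on which the circuit $C'$ behaves like $t'$ genuinely independent instances of the elementary problem $R_e$. First I would exploit the locality of shallow bounded-fan-in circuits: if $C'$ has depth $c\log n_e$ and fan-in at most two (the bounded-fan-in assumption), then the backward light cone of any single output bit contains at most $2^{c\log n_e}=n_e^c$ input bits. Since each of the $t$ output blocks consists of $m_e$ bits, output block $j$ can depend on at most $L:=m_e n_e^c$ of the $t$ input blocks; I denote this dependency set by $B(j)\subseteq[t]$.

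Next I would build a conflict graph $\mathcal{C}$ on the vertex set $[t]$ of copies, joining $i$ and $i'$ whenever input block $i'$ feeds output block $i$ or vice versa (that is, $i'\in B(i)$ or $i\in B(i')$). The number of edges is bounded by $\sum_{j}\abs{B(j)}\le tL$, so $\mathcal{C}$ has average degree at most $2L$ and hence, by a greedy Tur\'an-type argument, contains an independent set $S$ with $\abs{S}\ge t/(2L+1)\ge t/(6m_e n_e^c+2)=:t'$. By construction, for distinct $i,i'\in S$ the input block $i'$ does not influence output block $i$. Fixing every input bit lying outside the blocks indexed by $S$ to arbitrary constants therefore turns the portion of $C'$ that produces output block $i$ into a sub-circuit $f_i$ reading only input block $i$ (together with the internal randomness $r$ of $C'$), while still having depth at most $c\log n_e$ with bounded fan-in, so that $f_i$ is a legitimate circuit for the single-copy hypothesis.

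With this isolation in hand I would run the product argument, being careful about the shared randomness $r$. For each fixed value $r=r_0$, every $f_i(\cdot;r_0)$ is a deterministic depth-$(c\log n_e)$ bounded-fan-in circuit, so the single-copy hypothesis (a deterministic circuit being a special case of a randomized one) yields $\Pr_{\x_i}[R_e(\x_i,f_i(\x_i;r_0))=1]<1-\alpha$. Because the blocks $\{\x_i\}_{i\in S}$ are disjoint and uniform, the corresponding success events are mutually independent once we condition on $r_0$, so their joint probability factorizes and is strictly below $(1-\alpha)^{\abs{S}}$. Averaging over $r$ and over the fixed external inputs preserves this bound, and since solving all $t$ copies entails solving every copy in $S$, the overall success probability is at most $(1-\alpha)^{\abs{S}}\le(1-\alpha)^{t'}$, which is exactly the claim.

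The main obstacle, and the step I would treat most carefully, is guaranteeing independence of the extracted copies at two levels simultaneously. Structurally, the light-cone counting must deliver an independent set of the stated size $t'$; this is where the constant $6m_e n_e^c+2$ and the hypothesis $t\ge 6m_e n_e^c+2$ enter, ensuring $t'\ge 1$. Probabilistically, the shared internal randomness $r$ must be quarantined by conditioning on it \emph{before} invoking single-copy hardness, since the factorization of the success probability fails if one tries to treat the randomized sub-circuits as jointly independent without first fixing $r$. Getting the quantifier order right---fix the external inputs and $r$, apply the deterministic single-copy bound to each copy, and only then average over $r$---is the crux that makes the direct-product bound go through.
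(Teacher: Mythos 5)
Your proposal is correct and follows essentially the same route as the source: the paper defers the proof of this lemma to Le Gall's Theorem 10 and records only the key idea of extracting $t'$ copies on which the circuit acts independently, which is precisely the light-cone/conflict-graph extraction followed by the condition-on-the-randomness direct-product argument you give. Your independent-set bound $t/(2m_en_e^c+1)$ from the average-degree (Caro--Wei) argument is even slightly stronger than the stated $t/(6m_en_e^c+2)$, so the claimed inequality follows a fortiori.
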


The proof of \Cref{thm:GallRepThm} is provided in Ref.~\cite{Gall2018Average}. Here, we consider applying \Cref{thm:GallRepThm} on the relation $R_e$ we constructed in \Cref{thm:GallThm}. For this relation, we have $n_e=\Theta(d^6)$ and $m_e=\Theta(d^6)$. We take the positive integer $t=\lceil(6m_en_e^{1/8}+2)^{l}\rceil$ for $l=\frac19(\frac{17}{\epsilon}-8)$, where $\epsilon<1$ is a positive constant. The input and the output size for $R^*=R_e^{\times t}$ are $n=n_et$ and $m=m_et$. It is straightforward to observe that $t>n_e^{9l/8}$, which implies $t>n^{\frac{9l}{9l+8}}$. \Cref{thm:GallThm} and \Cref{thm:GallRepThm} indicates that there exist constants $c>0$ and $\alpha>0$ such that any $n$-input $m$-output (randomized) classical circuit $C'$ with bounded fan-in gates and depth at most $c\log n_e=O(\epsilon \log n)$ satisfies
\begin{align}
\frac{1}{2^n}\sum_{\x'\in\{0,1\}^n}\Pr[R^*(\x',C'(\x'))=1]<(1-\alpha)^{t^{(l-1)/l}}\leq(1-\alpha)^{n^{\frac{l-1}{l}\cdot\frac{9l}{9l+8}}}=(1-\alpha)^{n^{1-\epsilon}},
\end{align}
which leads to the following theorem.
\begin{proposition}\label{thm:CircuitThm}
There exists a relation $R^*:\{0,1\}^n\times\{0,1\}^m\to\{0,1\}$ for which the following two assertions hold
\begin{itemize}
    \item There exists a constant-depth quantum circuit with single- and two-qubit gates that for any input $\x\in\{0,1\}^n$ can output a string $\y\in\{0,1\}^m$ such that $R^*(\x,\y)=1$ for certain.
    \item Any classical circuit $C$ with bounded fan-in gates satisfying
    \begin{align}
    \frac{1}{2^n}\sum_{\x\in\{0,1\}^n} \Pr[R^*(\x,C(\x))=1]\geq\exp(-\gamma n^{1-\epsilon})
    \end{align}
    for some constant $\gamma>0$ has depth $\Omega(\log n_e)=\Omega(\epsilon\log n)$.
\end{itemize}
\end{proposition}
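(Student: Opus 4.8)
The plan is to obtain $R^*$ as a $t$-fold direct product $R_e^{\times t}$ of the elementary relation $R_e$ from \Cref{thm:GallThm}, with the repetition parameter $t$ tuned so that the exponential decay of the classical success probability lands exactly on $\exp(-\gamma n^{1-\epsilon})$. The two bullet points then follow from the two lemmas separately: the quantum upper bound from the noiseless circuit in \Cref{thm:GallThm}, and the classical lower bound from the direct-product amplification of \Cref{thm:GallRepThm}. The real content is the calibration of $t$ as a function of $\epsilon$, $n_e$, and $m_e$.

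For the quantum direction I would run $t$ independent copies of the constant-depth circuit guaranteed by \Cref{thm:GallThm} in parallel, one copy per block of $n_e$ input bits. Since the copies act on disjoint qubit registers and are executed simultaneously, the total depth stays constant (single- and two-qubit gates only), and each copy outputs a string in $R_e(\x_i)$ with probability $1$. The concatenation therefore lies in $R_e(\x_1)\times\cdots\times R_e(\x_t)$ with certainty, establishing the first bullet.

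For the classical direction I would fix the depth parameter $c=\tfrac{1}{8}$ and apply \Cref{thm:GallRepThm}. Recalling $n_e=\Theta(d^6)$ and $m_e=\Theta(d^6)$, I set $t=\lceil(6m_en_e^{1/8}+2)^{l}\rceil$ with $l=\tfrac19(\tfrac{17}{\epsilon}-8)$, which is positive precisely because $\epsilon<1$. The repetition theorem then bounds the average success probability of any depth-$c\log n_e$ circuit by $(1-\alpha)^{t/(6m_en_e^{1/8}+2)}=(1-\alpha)^{t^{(l-1)/l}}$. Using $6m_en_e^{1/8}+2=\Theta(n_e^{9/8})$ to get $t>n_e^{9l/8}$, and then $n=n_et$ to get $t>n^{9l/(9l+8)}$, the exponent $t^{(l-1)/l}$ exceeds $n^{9(l-1)/(9l+8)}$. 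Solving $\tfrac{9(l-1)}{9l+8}=1-\epsilon$ is exactly what forces the stated value of $l$, so the bound becomes $(1-\alpha)^{n^{1-\epsilon}}=\exp(-\gamma n^{1-\epsilon})$ with $\gamma=\ln\tfrac{1}{1-\alpha}>0$. Taking the contrapositive yields the second bullet, and the same bookkeeping (the identity $8+9l=17/\epsilon$) shows $c\log n_e=\Theta(\epsilon\log n)$, so the depth lower bound reads $\Omega(\epsilon\log n)$.

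The main obstacle is this parameter calibration rather than any conceptual difficulty: I must simultaneously (i) keep $t\geq 6m_en_e^{1/8}+2$ so that \Cref{thm:GallRepThm} applies, (ii) translate the exponent $t^{(l-1)/l}$ into a clean power of $n$ via the two-sided estimates relating $t$, $n_e$, and $n$, and (iii) confirm that the depth threshold $c\log n_e$, stated in terms of the block size, really is $\Theta(\epsilon\log n)$ in terms of the full input size $n$. Each of these rests on the fact that $n_e$ and $m_e$ share the same $\Theta(d^6)$ scaling, which collapses the algebra to the single identity $8+9l=17/\epsilon$; verifying that this identity delivers both the target exponent $1-\epsilon$ in the probability and the factor $\epsilon$ in the depth is the crux.
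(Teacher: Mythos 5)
Your proposal is correct and follows essentially the same route as the paper: form $R^*=R_e^{\times t}$ with $t=\lceil(6m_en_e^{1/8}+2)^{l}\rceil$ and $l=\tfrac19(\tfrac{17}{\epsilon}-8)$, get the quantum bullet from parallel copies of the constant-depth circuit in \Cref{thm:GallThm}, and get the classical bullet by feeding \Cref{thm:GallThm} into \Cref{thm:GallRepThm} and tracking the exponent $t^{(l-1)/l}\geq n^{9(l-1)/(9l+8)}=n^{1-\epsilon}$. The calibration identity $8+9l=17/\epsilon$ you isolate is exactly the computation the paper performs, including the observation that it simultaneously yields $c\log n_e=\Theta(\epsilon\log n)$.
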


We then derive Theorem 1 in the main text based on \Cref{thm:CircuitThm}. We consider classical feedforward neural networks. In these neural networks, neurons are divided into different layers with the first layer known as the input layer, the last layer known as the output layer, and the intermediate layers known as hidden layers. Each neuron in this network is connected to some of the neurons in the previous layer. The value of the neuron depends only on the neurons it is connected to. We denote \textit{connectivity} as the number of neurons that are connected to a chosen neuron.

We observe that a neuron can be viewed as a gate. The connectivity for a neuron is the fan-in number for the gate. Therefore, \Cref{thm:CircuitThm} also holds for classical neural networks with bounded connectivity. The subtlety here is that while gates compute Boolean functions, neurons compute real-valued functions. But since the input and the output from the output layer are all Boolean, each output bit of the neural network is a (possibly random combination) of deterministic Boolean function that depends only on a restricted set of input bits. Since the proof for \Cref{thm:CircuitThm} only relies on such restricted causal dependencies, and a random combination of answers cannot perform better than the best deterministic answer, this subtlety does not pose a problem.

For the quantum circuit with measurement, we consider the variational quantum circuit mentioned in the previous section \ref{sec:VQC}. As the preparation procedure of the graph state in \Cref{thm:CircuitThm} only contains single- and two-qubit gates, we can use a classically-controlled quantum circuit with constant depth and Clifford gates to depict the relation. The difference between this classically-controlled quantum circuit and the (variational) quantum circuit model in \Cref{sec:VQC} is that classical input can change the parameters for gates. To transfer this model to a quantum circuit model, we first encode the classical bit string into a quantum state (for example, $0001\to\ket{0001}$ for $n=4$). Next, we use a controlled-R gate for
\begin{align}
R=\frac{1}{\sqrt{2}}\begin{pmatrix}1 & - i\\ 1 & i\end{pmatrix}
\end{align}
to control the original qubits using these input qubits to realize the base changes between measurements on the X- and Y-basis. In this case, the circuit model has a quantum input and outputs the same result as the classically-controlled quantum circuit. By picking some fine-tuned parameters, we can reduce the variational quantum circuit to this circuit. Therefore, we only require a variational quantum circuit with constant depth and fine-tuned parameters to represent the relation $R^*$. As mentioned in the main text, such a relation can be regarded as a hypothesis function in supervised learning. This completes the proof for Theorem 1 in the main text.

In the following part, we extend Theorem 1 in the main text to loss functions different from Eq.~(1) in the main text.  As mentioned in Ref.~\cite{Gall2018Average}, the probability distribution associated with $R^*$ in Theorem 1 over all the legitimate outputs $\y$ is an uniform distribution (see \Cref{sec:PfThm2} for a proof). Therefore, the probability distribution for quantum circuit $p_{\mathcal{Q}}(\y;\x)$ is also an equal distribution over all possible outcomes. We consider the output distribution $p_C(\y;\x)$ over $\y\in\{0,1\}^m$ using a classical neural network with bounded connectivity and depth at most $O(\epsilon\log n)$. Here, we choose the loss function for the neural network $C$ as the KL divergence between $p_C(\y;\x)$ and $p_{\mathcal{Q}}(\y;\x)$ (the ideal distribution) averaged over all $\x$
\begin{align}\label{eq:LossKL}
\mathcal{L}_{KL}(R_C,R^*)=\frac{1}{2^n}\sum_{\x}d_{KL}(p_{\mathcal{Q}}(\y;\x),p_C(\y;\x))=-\frac{1}{2^n}\sum_{\x}\sum_{\y}p_{\mathcal{Q}}(\y;\x)\log(\frac{p_C(\y;\x)}{p_{\mathcal{Q}}(\y;\x)}).
\end{align}
According to Theorem 1 in the main text, the probability of the neural network outputting even one correcting label $\y$ is bounded by $\exp(-\gamma n^{1-\epsilon})$. Thus the loss function is bounded below by
\begin{align}
\mathcal{L}_{KL}(R_C,R^*)&=-\frac{1}{2^n}\sum_{\x}\sum_{\y\in S(\x)}\frac{1}{\abs{S(\x)}}\log\left(\abs{S(\x)}p_C(\y;\x)\right)\\
&\geq-\frac{1}{2^n}\sum_{\x}\log\left(\sum_{u\in S(\x)}\frac{1}{|S(\x)|}|S(\x)|p_C(\y;\x)\right)\\
&\geq\frac{1}{2^n}\sum_{\x}\gamma n^{1-\epsilon}=\gamma n^{1-\epsilon},
\end{align}
where $S(\x)$ denotes all the $\y$ that is correct for input $\x$, i.e. $R^*(\x,\y)=1$. For the variational quantum circuit considered above, the loss is $0$.
We have thus shown that the quantum-classical separation can be extended to different loss functions that are commonly considered in (quantum) machine learning.

\section{Noise Thresholds for the quantum-classical Separation}\label{sec:NoiseCircLower}
Here, we consider the noise thresholds for the quantum-classical separation in Theorem 1 in the main text to hold under noisy quantum implementation in Fig. 1 (b) in the main text. We first consider the performance of variational quantum circuits for supervised learning tasks when we add a noise of strength $p$, i.e., a depolarization channel of the form
\begin{align}\label{eq:NoiseChannel}
\rho\to\NN(\rho)=(1-p)\rho+\frac{p}{3}(X\rho X+Y\rho Y+Z\rho Z).
\end{align}
after each qubit in each step. In particular, we show the following bounds on the loss function in Eq.~(1) in the main text when the quantum classifier we considered in Theorem 1 in the main text suffers from noise.
\begin{theorem}\label{thm:NoiseCircPerf}
We consider the relation $R^*$ and the constant-depth variational quantum circuit with fine-tuned parameters in Theorem 1. Suppose the noise follows the model in Eq.~\eqref{eq:NoiseChannel}, then the loss function between the noiseless relation $R^*:\{0,1\}^n\times\{0,1\}^m\to\{0,1\}$ and the relation $R_Q$ defined by the noisy shallow circuit is bounded by
\begin{align*}
\LL_P(R^*,R_Q)=1-\exp(-\Theta(pn)).
\end{align*}
\end{theorem}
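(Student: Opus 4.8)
\emph{Proof proposal.} The plan is to bound the success probability $P_{\mathrm{succ}}=1-\LL_P(R^*,R_Q)$ from both sides and show it equals $\exp(-\Theta(pn))$. First I would reduce everything to a statement about Pauli errors. Since the graph-state preparation and the (classically-controlled) basis changes are Clifford operations, the set of legitimate outputs $\Lambda_d(A)$ is an affine subspace $\z^*+L\subseteq\{0,1\}^m$ of some linear code $L$ (the same stabilizer fact used for \Cref{thm:NoisyBound}, cf.~Refs.~\cite{Dehaene2003Clifford,Nest2008Classical}). Because the circuit is Clifford, every depolarizing event can be commuted to the measurement layer, where it acts as a Pauli error; collecting these, the measured string is $\z^*\oplus e$ for a random net flip pattern $e$, and $\z^*\oplus e\in\Lambda_d(A)$ if and only if $e\in L$. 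Hence $P_{\mathrm{succ}}=\Pr[e\in L]=\Pr[\text{all parity checks of }L\text{ pass}]$, and the whole analysis becomes a question about the statistics of $e$.

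For the lower bound, I would simply estimate the probability that no noise event occurs anywhere. The circuit has constant depth $D=O(1)$ acting on $\Theta(n)$ qubits, so there are $N=\Theta(n)$ noise locations; writing $\NN$ as ``identity with probability $1-p$, random Pauli with probability $p$,'' the event that every channel acts as the identity has probability $(1-p)^{N}=\exp(-\Theta(pn))$. Conditioned on this event the run is noiseless and outputs a correct label with certainty, so $P_{\mathrm{succ}}\ge(1-p)^{N}=\exp(-\Theta(pn))$.

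The upper bound is where the work lies. Here I would exhibit $\Omega(n)$ local stabilizer parity checks of $L$ whose supports---and, crucially, whose preimages under the constant-depth circuit (their light cones)---are pairwise disjoint. The extended graph $\og_d$ has constant maximum degree and $\Theta(m_e)$ inserted/face vertices, each contributing a constant-weight local constraint on the outcomes; taking these across all $t$ copies and thinning by a greedy argument on the bounded-degree overlap graph yields $\Omega(m)=\Omega(n)$ checks with disjoint light cones. For each such check $c_i$ I would isolate one noise location inside its light cone: the event that this location fires an outcome-flipping Pauli while the rest of the light cone stays quiet has probability $\Omega(p)$, flips exactly one bit of $\mathrm{supp}(c_i)$, and therefore forces $c_i$ to fail. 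Disjointness of light cones makes these events independent across $i$, so $P_{\mathrm{succ}}\le\prod_i\bigl(1-\Omega(p)\bigr)=\bigl(1-\Omega(p)\bigr)^{\Omega(n)}=\exp(-\Omega(pn))$, which together with the lower bound gives $\LL_P(R^*,R_Q)=1-\exp(-\Theta(pn))$.

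The main obstacle I anticipate is the combinatorial core of the upper bound: verifying that the construction of \Cref{def:GallThm} genuinely supplies $\Omega(n)$ constant-weight checks with disjoint light cones, and that this holds uniformly over the inputs $A$. The $X$/$Y$ measurement choice on $V_2$ changes the offset $\z^*$ and some signs, so I must ensure it does not destroy the local check structure---most safely by drawing the checks from the always-$X$-measured vertices in $V_1\cup V^*$, whose constraints are $A$-independent. The remaining technical point is the independence claim: confirming that propagating errors through the Clifford circuit keeps the contributions of disjoint light cones genuinely independent and that each designated location contributes an $\Omega(p)$ flip. The constant depth is precisely what keeps light cones constant-sized and hence guarantees that enough of them can be chosen disjoint.
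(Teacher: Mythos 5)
Your proposal is correct and takes essentially the same route as the paper in both directions: the lower bound on success via the probability $(1-p)^{\Theta(n)}$ that no noise event fires anywhere, and the upper bound via $\Theta(n)$ disjoint constant-weight parity constraints of the graph state (the small triangle cycles of $\og_d$, which are $A$-independent), each violated with probability $\Omega(p)$. The independence obstacle you anticipate is dissolved in the paper by a simplification worth adopting: only the depolarizing channels in the layer immediately before measurement are used (conditioning on all earlier noise), so each output bit independently passes through a binary symmetric channel with crossover probability $2p/3$, and each cycle's constraint survives with probability at most $e^{-p/2}$ whether or not the pre-noise outcome satisfied it.
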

\begin{proof}
We first prove the lower bound for the loss function as
\begin{align}
\LL_P(R^*,R_Q)=1-\exp(-O(pn)).
\end{align}
We consider the fidelity between the output of the noisy quantum circuit and the output of the noiseless case. As we have assumed that a depolarization channel occurs after each qubit in each layer, we just compute the contribution to the fidelity from the part $(1-p)\rho$ in each noise channel and ignore the contribution from the term $\frac{p}{3}(X\rho X+Y\rho Y+Z\rho Z)$. Therefore, the fidelity is bounded below by
\begin{align}
F\geq (1-p)^{nd}\sim\exp(-O(pnd))
\end{align}
for some constant depth $d$ as there are at most $nd$ noise channels after $O(nd)$ gates. As the noiseless quantum circuit exactly outputs the relation with zero loss, the loss for the noisy shallow quantum circuit is therefore bounded above by $1-\exp(-O(pn))$.

Next, we prove the lower bound for the loss function as
\begin{align}
\LL_P(R,R_Q)=1-\exp(-\Omega(pn)).
\end{align}
We consider Le Gall's construction of hard instance in proving \Cref{thm:GallThm}. It is defined by measurement outcomes on a graph state. By the definition of graph state, the output bits on each cycle in the graph are constrained by an equality~\cite{Gall2018Average}. The smallest cycle in the figure is a small triangle containing two vertices in $V_1$, one vertex in $V_2$, and three vertices in $V^*$. It is straightforward to observe that we can always choose $\Theta(n_e)$ \textit{disjoint} cycles. In the following, we consider the three vertices in $V^*$ which should satisfy theorem $5$ of Ref.~\cite{Gall2018Average}. We consider the errors before and after the \textit{last computational layer} of the quantum circuit separately. Without loss of generality, we can move the errors between the last computational layer and the measurements to the same qubit after the measurements. After such transformation, each depolarization channel becomes a binary symmetric channel with crossover probability $2p/3$ because only Pauli-X and Pauli-Y errors will flip the bits. Thus the errors that flip the bits of the output string happen with a $2p/3$ rate. We consider the two possible cases depending on the different measurement results:
\begin{itemize}
    \item If the measurement results satisfy the constraint in one of the cycles, then the output after the binary symmetric channels still satisfies the constraint with probability at most
    \begin{align}
    P_{\text{one-layer}}<\left(1-\frac{2p}{3}\right)^3+3\left(1-\frac{2p}{3}\right)\left(\frac{2p}{3}\right)^2\leq e^{-p/2}.
    \end{align}
    Since one or three errors simultaneously will lead to an error.
    \item If the measurement results do not satisfy the constraint in one of the cycles, then the output after the binary symmetric channels satisfies the constraint with probability at most
    \begin{align}
    P_{\text{one-layer}}<3\left(1-\frac{2p}{3}\right)^2\left(\frac{2p}{3}\right)+\left(\frac{2p}{3}\right)^3\leq e^{-p/2}.
    \end{align}
\end{itemize}
Therefore, in either case, the output of the noisy quantum circuit will satisfy the constraint with probability at most $\exp(-p/2)$ for each cycle. Since there are $\Theta(n_e)$ disjoint cycles and the constraints are necessary conditions for the output to be correct, the probability of a correct output for the noisy quantum circuit is upper bounded by $\exp(-\Omega(pn_e)))$ for the elementary relation. For the relation $R^*$ after amplification, the probability for a correct output is bounded by $\exp(-\Omega(pn)))$.
\end{proof}

\Cref{thm:NoiseCircPerf} demonstrates that the noisy implementation of the variational quantum circuit with fine-tuned parameters can reduce the loss function $\LL_P$ to $1-\exp(-\Theta(pn))$. This result indicates that at least with probability $\exp(-\Theta(pn))$ the quantum circuit can output a correct label. Comparing \Cref{thm:NoiseCircPerf} with Theorem 1 in the main text, we can observe that the depolarization noise will result in an exponential decay in the probability of outputting a correct label originated from the exponential decay in the fidelity. If the noise rate $p$ is bounded by
\begin{align}\label{eq:NoiseRateA}
p=O(n^{-\epsilon}),
\end{align}
for some constant $\epsilon>0$, we can make sure that the quantum-classical separation persists on noisy devices.

Eq.~\eqref{eq:NoiseRateA} gives the noise threshold for the quantum-classical separation to persist in Theorem 1. For a classical neural network with bounded connectivity of depth $O(\epsilon\log n)$, Theorem 1 shows that it can output a correct label for $R^*$ with probability no more than $\exp(-O(n^{1-\epsilon}))$ for a positive constant $\epsilon<1$. However, this is an upper bound on the performance of any classical neural network. It would be natural to ask whether we can prove a tighter lower bound on the performance of classical neural networks. Moreover, it is important to explore whether it is possible to improve the noise threshold in Eq.~\eqref{eq:NoiseRateA} to a constant one. In the following, we give a no-go theorem for keeping quantum advantage under a constant noise rate by proving the lower bound on the representation power for classical circuits based on classical simulation of Clifford $+$ T gates~\cite{Wang2022Possibilistic,Napp2022Efficient}. As a necessary assumption, we notice that the relation defined by a graph state can be described by a quantum circuit containing only Clifford and T gates with a constant depth. For this class of relation, we have the following theorem.
\begin{theorem}\label{thm:nogolower}
If the relation $R\in\{0,1\}^n\times\{0,1\}^m\to\{0,1\}$ can be defined by a two-dimensional shallow quantum circuit on a grid containing only Clifford gates and T gates, there always exists a classical circuit $C$ containing only bounded fan-in gates of depth $c\log n$ that can output a correct output for the relation with probability at least
\begin{align}
\Pr[R(\x,C(\x))=1]\geq\exp(-\gamma(c)O\left(\frac{n}{\sqrt{\log(n)}}\right))
\end{align}
for any input $\x\in\{0,1\}^n$, where $\gamma(c)$ is a function that only depends on $c$.
\end{theorem}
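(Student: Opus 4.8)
The plan is to exhibit an explicit classical algorithm, implemented as a randomized bounded-fan-in circuit of depth $c\log n$, that partitions the two-dimensional grid into square blocks, simulates each block exactly in parallel, and random-guesses only the small fraction of output bits whose causal history crosses a block boundary. The target failure exponent $O(n/\sqrt{\log n})$ will come out as exactly the number of such boundary bits.

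First I would partition the $\sqrt n\times\sqrt n$ grid into disjoint blocks of size $L\times L$ with $L=\Theta(\sqrt{c\log n})$, so that each block carries $L^2=O(\log n)$ qubits. Call a two-qubit gate \emph{inter-block} if it acts across two different blocks, and let $U'$ be the circuit obtained from the given circuit $U$ by deleting all inter-block gates. Since $U'$ factorizes as a tensor product over blocks and each block hosts only $O(\log n)$ qubits, the block output distribution is a vector of $2^{O(\log n)}=\mathrm{poly}(n)$ amplitudes; because the depth is constant, its restriction to a single block is a function of only the $O(\log n)$ input bits in the constant-radius backward light cone of that block, so it can be evaluated — and sampled from, via inverse-transform sampling on the circuit's random bits — by a bounded-fan-in sub-circuit of depth $O(L^2)=O(c\log n)$. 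Running all blocks in parallel yields a depth-$O(c\log n)$ circuit that samples a string $\y'$ from the output distribution of $U'$. The gate set is irrelevant for this step, so $T$ gates (indeed any $O(1)$-qubit gates) cause no difficulty; this is why the Clifford$+$T structure is only incidental here.

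Next I would define the set $S$ of \emph{affected} output qubits to be those whose backward light cone in $U$ contains an endpoint of some inter-block gate, and let $T$ be its complement. A counting argument bounds $|S|$: the $L\times L$ partition has $O(n/L)$ grid edges crossing block boundaries, hence $O(n/L)$ inter-block gates (constant depth), and each has a forward light cone of $O(1)$ qubits, so $|S|=O(n/L)=O(n/\sqrt{c\log n})$. The structural heart of the argument is a light-cone identity: since every qubit of $T$ has a backward light cone avoiding all inter-block gates, the joint marginal distribution of the measurement outcomes on $T$ is identical for $U$ and $U'$, because gates outside the causal cone of $T$ cancel under the partial trace. Consequently $\y'|_T$ always lies in the support of the true $T$-marginal of $U$, which is precisely to say that there exists a globally valid output $\y$ of $U$ agreeing with $\y'$ on $T$.

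Finally the algorithm outputs $\y'$ on $T$ together with an independent uniform random guess on $S$. Conditioned on $\y'|_T$, which by the previous step always extends to a valid string, at least one assignment of the $|S|$ bits completes it to a valid output, so the random guess succeeds with probability at least $2^{-|S|}=\exp\!\big(-\gamma(c)\,O(n/\sqrt{\log n})\big)$, with the dependence $\gamma(c)=\Theta(1/\sqrt c)$ arising from $L=\Theta(\sqrt{c\log n})$. The main obstacle I anticipate is not the counting but the light-cone identity and its consequence: one must argue carefully that deleting inter-block gates leaves the $T$-marginal \emph{exactly} invariant, so that no constraint coupling $T$ to $S$ is silently violated by the block-wise sample, and that membership of $\y'|_T$ in the support of the $T$-marginal guarantees a genuinely global completion on $S$ rather than a merely block-consistent one. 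Checking that the block simulation and inverse-transform sampling really fit within depth $c\log n$ with bounded fan-in is the remaining routine verification.
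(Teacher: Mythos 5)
Your proposal is correct and follows the same overall architecture as the paper's proof: partition the grid into blocks of side $\Theta(\sqrt{c\log n})$, simulate each block exactly in depth $O(c\log n)$, bound the number of boundary-affected output qubits by $O(n/\sqrt{c\log n})$, and random-guess those, giving $\gamma(c)=\Theta(1/\sqrt{c})$ exactly as in the paper. The genuine difference is in the block-simulation step: the paper invokes a specific lemma (from the possibilistic-simulation literature) stating that a depth-$d$ Clifford circuit with $t$ T gates can be p-simulated in classical depth $O(d+t)$, which is the sole reason the theorem is stated for Clifford$+$T circuits; you instead observe that each output bit of a block is a Boolean function of the $O(\log n)$ inputs in its light cone and can therefore be hard-wired as a truth table (DNF) in fan-in-$2$ depth $O(\log n)$ regardless of the gate set. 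Your route is more elementary and strictly more general --- it removes the Clifford$+$T hypothesis --- at the cost of a $\mathrm{poly}(n)$-size rather than structured circuit, which is immaterial here since only depth is constrained. You also make explicit the light-cone marginal-invariance argument (deleting inter-block gates leaves the joint marginal on the unaffected qubits unchanged, so the block-wise output on those qubits extends to a globally valid string); the paper leaves this step implicit, and your worry about it is resolved exactly as you suggest: the backward light cone of the unaffected set contains no deleted gate, so the reduced pre-measurement state on that set is identical for $U$ and $U'$, and membership of $\y'|_T$ in the support of that marginal is by definition the existence of a global completion. One small simplification: since you only need a string in the support rather than a sample from the distribution, the inverse-transform sampling machinery is unnecessary --- a deterministic choice of any supported string per block suffices and avoids precision issues.
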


Before providing the proof of \Cref{thm:nogolower}, we consider the most trivial classical algorithm: randomly guessing the output for the relation. This classical algorithm has no requirement on the depth of the classical circuit or neural network. Notice that since there are in total $2^n$ output strings and at least one of them is a possible output of the quantum circuit, this algorithm will always output a correct output of the relation with probability at least $2^{-n}=\exp(-\Theta(n))$. Using this bound, we can expect a quantum advantage even with a constant noise rate according to Theorem 1 and \Cref{thm:NoiseCircPerf}. However, we have a classical neural network with bounded connectivity and logarithmic depth, it is natural to expect that we can perform better than randomly guessing the output. In the following, we give an affirmative answer to this intuition. To begin with, we introduce some basic notations and concepts. We follow the concepts used in Ref.~\cite{Wang2022Possibilistic}.
\begin{definition}
A quantum circuit $Q$ with $n$ input qubits and
measurement on $m$ output qubits in the computational basis defines a relation $R_{QC}\subseteq\{0,1\}^n\times\{0,1\}^m\to\{0,1\}$ by
\begin{align}
R_{QC}(\x,\y)=1\Leftrightarrow\bra{\y}Q\ket{\x}\neq0,
\end{align}
where $Q\ket{\x}$ is the output for $Q$ with input $\x\in\{0,1\}^n$. Let $C$ be a classical circuit and let $R$ be a relation. We say $C$ \textit{p-simulates} $R$ if:
\begin{align}
R(\x,C(\x))=1 
\end{align}
for any $\x\in\{0,1\}^n$.
\end{definition}
\begin{figure*}[t]
    \centering
    \includegraphics[width=0.99\textwidth]{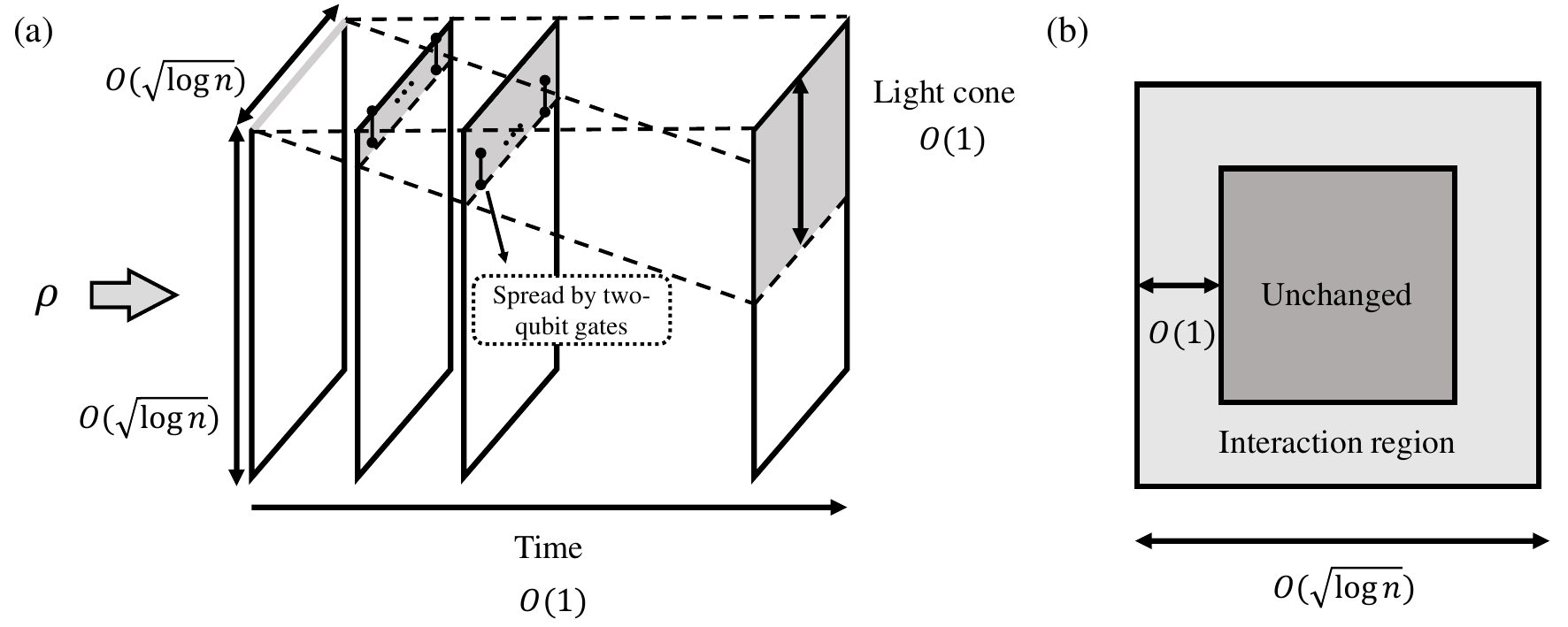}
    \caption{Depth complexity in simulating shallow Clifford $+$ T quantum circuit using classical circuits. As the quantum circuit only contains single- and two-qubit gates, the light cone of a gate can contain at most $O(1)$ qubits in the $\sqrt{c\log n/c_1}\times \sqrt{c\log n/c_1}$ block. Therefore, the interacting qubits on one border can at most affect the output qubits of size $\sqrt{c\log n/c_1}\times O(1)$ as shown in (a). As shown in (b), all the output qubits in the block can thus be divided into qubits (the outer region) interacting with the qubits outside the block and the qubits (the inner region) independent of qubits outside the block.}
    \label{fig:ClassSim}
\end{figure*}
Under this definition, Ref.~\cite{Wang2022Possibilistic} gives a construction for classical circuits using $2$-fan-in gates $\{$AND, OR,  NOT$\}$ that can p-simulate the outputs of a (shallow) quantum circuit.
\begin{lemma}[Theorem 1, Ref.~\cite{Wang2022Possibilistic}]\label{lem:ClassSimQ}
Any $n$-qubit quantum circuit $Q$ of depth $d$
containing Clifford gates and $t$ T gates can be p-simulated by a classical circuit $C$ of depth $O(d+t)$ containing only $\{$AND, OR, NOT $\}$ gates.
\end{lemma}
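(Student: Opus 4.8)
The plan is to prove \Cref{lem:ClassSimQ} in two stages, following the possibilistic-simulation strategy of Ref.~\cite{Wang2022Possibilistic}. Since p-simulating the relation $R_{QC}$ defined by $Q$ only requires producing, from $\x$, \emph{some} string $\y$ with $\bra{\y}Q\ket{\x}\neq 0$, only the computational-basis support of the output state matters, not the amplitudes. I would first dispose of the purely Clifford case ($t=0$) and then absorb the $t$ T gates at an additive depth cost, so that the total depth becomes $O(d)+O(t)=O(d+t)$.

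For a Clifford circuit $Q$ of depth $d$, the state $Q\ket{\x}$ is a stabilizer state, whose support is an affine subspace $\{\y:\,M\y=v(\x)\}$ of $\mathbb{Z}_2^m$; here the parity-check rows of $M$ come from the diagonal ($Z$-type) elements of the stabilizer group and the syndrome $v(\x)$ is an $\mathbb{F}_2$-affine function of $\x$. I would obtain this description by propagating the initial stabilizers $(-1)^{x_i}Z_i$ through the circuit in the Heisenberg picture: each conjugated generator $Q\,Z_i\,Q^{\dagger}$ is a Pauli supported on the forward light cone of qubit $i$, which contains at most $2^{d}$ qubits for one- and two-qubit gates. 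The classical circuit then outputs one particular solution, organized so that each output bit $y_j$ is an $\mathbb{F}_2$-parity of the input bits in its backward light cone (at most $2^{d}$ of them) plus a constant fixed offline from the circuit description. A parity of $2^{d}$ bits is an XOR-tree of depth $d$, and each fan-in-$2$ XOR is built from $O(1)$ gates of $\{$AND, OR, NOT$\}$; the resulting circuit has depth $O(d)$ and, by construction, $\y$ satisfies every stabilizer parity constraint, so $\bra{\y}Q\ket{\x}\neq 0$. (The circuit may be exponentially wide in $d$, but only its depth is constrained.)

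To include the $t$ T gates, I would expand each $T$ on qubit $k$ as $T=aI+bZ_k$ with $a,b\neq 0$, so that $Q=\sum_{s\in\{0,1\}^{t}}\lambda_s W_s$ is a weighted sum of $2^{t}$ Clifford operators $W_s$ (each a product of the Clifford layers with inserted $Z$-strings) carrying nonzero weights $\lambda_s$. Committing to one branch $s$ and applying the Clifford construction above produces a candidate $\y$ for which the single term $\lambda_s\bra{\y}W_s\ket{\x}$ is nonzero, and the $t$ inserted $Z$-strings deepen the effective Clifford circuit, hence its light cones and XOR-trees, to depth $O(d+t)$. The main obstacle is that $\bra{\y}Q\ket{\x}=\sum_{s}\lambda_s\bra{\y}W_s\ket{\x}$ could vanish through destructive interference among branches even when one branch contributes. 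Resolving this is the technical crux: rather than tracking a single branch I would maintain, gate by gate, a refined stabilizer-plus-phase certificate that a committed basis state survives the full superposition, updating it locally so each T gate adds only $O(1)$ to the depth. Once this certificate is shown to be consistent with the global interference pattern—the delicate part of the argument—the depth accounting of the Clifford stage carries over and yields the claimed $O(d+t)$ bound.
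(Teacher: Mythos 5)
Your Clifford stage is essentially sound, and it reconstructs the right mechanism: for a Clifford $Q$ of depth $d$, writing $\ket{\x}=X^{\x}\ket{0^n}$ and conjugating gives $Q\ket{\x}=P(\x)\,Q\ket{0^n}$ for a Pauli $P(\x)$ whose $X$-part is $\mathbb{F}_2$-linear in $\x$ and supported on forward light cones; hence one valid output is $\y(\x)=\y^*\oplus f(\x)$ with $\y^*$ a fixed element of $\mathrm{supp}(Q\ket{0^n})$ computed offline and each $f_j$ a parity over the $\leq 2^d$ input bits in the backward light cone of output $j$, realizable by XOR trees of depth $O(d)$ in $\{$AND, OR, NOT$\}$. (One caveat even here: your specific route of solving $M\y=v(\x)$ via a particular solution does not obviously respect light cones bitwise, since a pseudo-inverse of $M$ mixes rows; the $X$-propagation argument above is what actually justifies the locality claim you assert.) Note also that the paper itself offers no proof of \Cref{lem:ClassSimQ} --- it imports it verbatim as Theorem 1 of Ref.~\cite{Wang2022Possibilistic} --- so your attempt has to stand entirely on its own.

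It does not, because the T-gate stage contains a gap that you name but never close, and the single-branch commitment it rests on is provably wrong. Committing to a branch $s$ with $\lambda_s\bra{\y}W_s\ket{\x}\neq 0$ gives no control over $\bra{\y}Q\ket{\x}=\sum_{s'}\lambda_{s'}\bra{\y}W_{s'}\ket{\x}$, and cancellation genuinely occurs: take the single-qubit, depth-$O(1)$ circuit $Q=HT^4H$ with $t=4$. Since $T^4=Z$, we have $Q=X$ up to global phase, so the support of $Q\ket{x}$ is $\{x\oplus 1\}$; yet expanding each $T=aI+bZ$ (with $a,b\neq 0$), the all-identity branch gives $W_s=HH=I$ and the nonzero term $a^4\braket{x}{x}$, so your recipe outputs the candidate $\y=x$, which is \emph{not} in the support. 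The ``stabilizer-plus-phase certificate that a committed basis state survives the full superposition,'' which you invoke to repair this, is precisely the content of the theorem being proved: you assert that such a certificate exists, can be updated locally, and adds only $O(1)$ depth per T gate, but you define none of it and prove none of it. As written, the argument establishes the lemma only for $t=0$; for any $t\geq 1$ the central claim is assumed rather than shown.
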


As we assume that $Q$ has only constant depth $d$, this shallow quantum circuit can be p-simulated by a classical circuit with bounded fan-in of depth $O(d+n)=O(n)$ as there are at most $t=nd=O(n)$ T gates. Without loss of generality, we assume that the coefficient here is $c_1$, i.e., we can always use a classical circuit with $c_1n$ depth containing only bounded fan-in gates. We are then ready for proving \Cref{thm:nogolower}.

\begin{proof}
Without loss of generality, we assume that the two-dimensional quantum circuit is defined on a $\sqrt{n}\times\sqrt{n}$ grid. It is worthwhile to mention that our result also carries over to other qubit patterns represented by planar graphs such that the interaction strength between a subsystem and the rest of the system is only related to the boundary of the subsystem instead of the subsystem size. We then divide the grids into $\sqrt{c\log n/c_1}\times\sqrt{c\log n/c_1}$ blocks. There are in total $\frac{nc_1}{c\log n}$ blocks. According to \Cref{lem:ClassSimQ}, there exists a classical circuit of depth $c\log n$ that can p-simulate each block ignoring the inter-block gates.

If there are no inter-block gates, then this classical circuit can output one correct output for certain, regardless of the input $\x\in\{0,1\}^n$. Now, we consider the effect of inter-block gates as shown in Fig.~\ref{fig:ClassSim}. For each block, there are $4\sqrt{c\log n/c_1}$ border qubits that can be affected directly by inter-block gates. As the quantum circuit only contains single- and two-qubit Clifford + T gates, all the qubits that can be affected by inter-block gates are bounded by the size of the light cone, which is $2^d$. As there are at most $4d\sqrt{c\log n/c_1}$ inter-block gates in each block, the qubits that can be affected by these gates are bounded by $4d2^d\sqrt{c\log n/c_1}$. Thus, the number of qubits that can be affected in $n$ qubits is
\begin{align}
\frac{nc_1}{c\log n}\cdot4d2^d\sqrt{c\log n/c_1}=4d2^d\frac{\sqrt{c_1}}{\sqrt{c}}\cdot\frac{n}{\sqrt{\log n}}.
\end{align}
If we randomly guess one output string for these qubits, then the classical algorithm can p-simulate the quantum circuit with probability at least
\begin{align}
P_{\text{success}}=2^{-4d2^d\frac{\sqrt{c_1}}{\sqrt{c}}\cdot\frac{n}{\sqrt{\log n}}}=\exp(-4\log 2~ d2^d\frac{\sqrt{c_1}}{\sqrt{c}}\cdot\frac{n}{\sqrt{\log n}}).
\end{align}
We then choose $\gamma(c)=4\log 2~d2^d\frac{\sqrt{c_1}}{\sqrt{c}}$ and finish the proof for \Cref{thm:nogolower}.
\end{proof}

This theorem provides an algorithmic lower bound on the probability for classical circuits using bounded fan-in gates in simulating shallow quantum circuits containing Clifford $+$ T gates. By combining the performance bound for noisy quantum circuits in \Cref{thm:NoiseCircPerf} and the above lower bound, we can conclude that the optimal possible noise rate is bounded by
\begin{align}
p=O\left(\frac{1}{\sqrt{\log n}}\right)
\end{align}
if we want to demonstrate the quantum-classical separation using the relation $R^*$ in Theorem 1 in the main text.

\section{Proof of Theorem 2}\label{sec:PfThm2}
In this section, we provide the proof for Theorem 2.
We begin with the following lemma.
\begin{lemma}[Ref.~\cite{Dehaene2003Clifford,Nest2008Classical}]\label{lem:CliffordState}
For any Clifford circuit $\mathcal{C}$ with $m$ qubits,
\begin{align}
\mathcal{C}|0\rangle\propto\sum_{y\in G}i^{l(y)}(-1)^{q(y)}|y\rangle,
\end{align}
where $G\subseteq \mathbb{Z}_2^m$ is an affine subspace, $l(y)$ is a linear function on $\mathbb{Z}_2^m$ and $q(y)$ is a quadratic function on $\mathbb{Z}_2^m$.
\end{lemma}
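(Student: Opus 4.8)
The plan is to prove the statement by induction on the number of elementary gates in $\mathcal{C}$, using the standard generating set $\{H, S, \mathrm{CNOT}\}$ of the Clifford group. I call a state \emph{affine--quadratic} if it can be written, up to normalization, as $\sum_{y\in G} i^{l(y)}(-1)^{q(y)}|y\rangle$ with $G\subseteq\mathbb{Z}_2^m$ an affine subspace, $l$ linear and $q$ quadratic. The base case is $|0\rangle^{\otimes m}$, which is affine--quadratic with $G=\{\mathbf 0\}$ and $l=q=0$. It then suffices to show that each of the three generators maps an affine--quadratic state to another one.

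First I would dispatch the two easy generators. A $\mathrm{CNOT}$ permutes computational basis states by an invertible linear map $M$ on $\mathbb{Z}_2^m$, so it sends $G$ to the affine subspace $MG$ and replaces $(l,q)$ by $(l\circ M^{-1}, q\circ M^{-1})$, which stay linear and quadratic. The phase gate $S$ on qubit $j$ multiplies the amplitude of $|y\rangle$ by $i^{y_j}$, adding the linear term $y_j$ to $l$ and leaving $G$ and $q$ untouched. Both are routine bookkeeping.

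The crux is the Hadamard gate. Applying $H$ on qubit $j$ and collecting the amplitude of a fixed output string $w$ reduces to an inner sum over the single variable $y_j$, ranging over those elements of $G$ agreeing with $w$ off coordinate $j$. Whether coordinate $j$ is \emph{free} is uniform over $w$---it is governed by whether the unit vector $e_j$ lies in the translation subspace of $G$---so the analysis is a finite case split. If $j$ is not free, the inner sum is a single surviving term, already a power of $i$. If $j$ is free, writing the coefficient of $y_j$ in $l$ as $\alpha$ and collecting the linear-in-$y_j$ part of $q$ together with its cross-terms into $\beta$, the inner sum becomes
\[
\sum_{y_j\in\{0,1\}} i^{\alpha y_j}(-1)^{\beta y_j}(-1)^{w_j y_j} = 1 + i^{\alpha}(-1)^{\beta+w_j}.
\]
For $\alpha$ even this equals $0$ or $2$: vanishing imposes a fresh linear constraint that cuts $G$ to a smaller affine subspace, while the value $2$ is a uniform global constant absorbed into $\propto$. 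For $\alpha$ odd it equals $\sqrt2\,e^{i\pi/4}\,i^{-s}$ with $s$ an affine function of $w$; here the modulus $\sqrt2$ and the eighth-root phase $e^{i\pi/4}$ are again uniform constants absorbed by $\propto$, whereas the residual power of $i$ and the dependence of $\beta$ on the remaining free coordinates feed new linear and new quadratic terms into $l'$ and $q'$.

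I expect this Hadamard step to be the main obstacle: one must check that the global modulus and phase factors are genuinely independent of $w$ (so that $\propto$ can absorb them and the output retains a uniform amplitude on the new support $G'$), and that the partial Fourier transform transforms the quadratic cross-terms of $q$ correctly---indeed it is precisely this step that generates new quadratic structure, which is why $q$ must be allowed to be a general quadratic form. An alternative, non-inductive route is the stabilizer formalism: $\mathcal{C}|0\rangle$ is stabilized by an abelian group of $2^m$ Hermitian Paulis, the purely $Z$-type stabilizers impose the linear equations whose solution set is the affine subspace $G$, and the $X$-containing stabilizers act transitively on $G$; expressing each $|y\rangle$ as a product of these $X$-type operators applied to a fixed base point and tracking the $i$ and $-1$ signs from the $ZX=-XZ$ anticommutations reproduces exactly the $i^{l(y)}(-1)^{q(y)}$ phase. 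Either way the content is that commuting Pauli constraints over $\mathbb{Z}_2$ force affine support together with $\mathbb{Z}_4$-valued amplitudes.
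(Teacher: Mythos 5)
Your induction over the generators $\{H,S,\mathrm{CNOT}\}$ is a correct and essentially complete proof of the full statement, but it is a genuinely different route from what the paper does: the paper does not prove the phase structure at all. It explicitly notes that only the affine-support part is needed downstream, cites the references for the full lemma, and gives a short stabilizer-group argument just for the support --- the $Z$-type stabilizers form a subgroup $G$ of rank $r$, the character sum $\frac{1}{2^m}\sum_{g\in G}(-1)^{\phi(g)\cdot\tilde{\y}}$ vanishes unless $\y$ satisfies $r$ linear equations, and hence the outcome distribution is uniform over an affine subspace. Your first (inductive) argument buys the complete conclusion, including the $i^{l(y)}(-1)^{q(y)}$ amplitudes, at the cost of the partial-Fourier bookkeeping you correctly identify as the crux; your sketched "alternative route" via $Z$-type versus $X$-type stabilizers is in fact the paper's argument, extended to recover the phases. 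One small correction to your "routine" cases: if $l$ is read as a $\mathbb{Z}_2$-valued linear function (as the lemma states), then the $S$ gate does \emph{not} leave $q$ untouched, since $i^{l(y)}\,i^{y_j}=i^{l(y)\oplus y_j}(-1)^{l(y)\,y_j}$ --- the conversion between integer and mod-$2$ addition in the exponent of $i$ injects a quadratic cross-term. The same conversion must be tracked when you substitute affine functions of $w$ into $l$ after the Hadamard step. None of this breaks the induction (it is exactly why $q$ must be a general quadratic form, and why some authors instead let the exponent of $i$ be a $\mathbb{Z}_4$-valued linear form), but it means the quadratic structure is generated by $S$ and by linear substitution as well, not only by $H$.
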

\Cref{lem:CliffordState} was proved in Ref.~\cite{Dehaene2003Clifford,Nest2008Classical}. Here, we only require that the support of $\mathcal{C}|0\rangle$ is an affine subspace, so we present another proof as the starting point.

Consider the stabilizer group $\mathcal{S}\subset\mathcal{P}^m$ of $\mathcal{C}|0\rangle$.
The elements in $\mathcal{S}$ with the form of a tensor product of the $I$ and $Z$ operators on each qubit form a subgroup $G\subseteq\mathcal{S}$.
There is a natural injective homomorphism $\phi$ from $G$ to $\mathbb{Z}_2^{m+1}$: for $1\leq i\leq m$, the $i$th bit of $\phi(g)$ is $1$ if $g$ has an $Z$ operator acting on the $i$th bit and $0$ otherwise.
The $m+1$th bit of $\phi(g)$ is $1$ if $g$ has a minus sign, and $0$ otherwise.
Suppose the rank of $G$ is $r\leq m$, then $\phi(G)$ is an $r$ dimensional subspace of $\mathbb{Z}_2^{m+1}$.
Suppose $g_1,\ldots,g_r$ is a basis for it.
The probability of obtaining the output $\y\in\mathbb{Z}_2^m$ for the input $\x$ is
\begin{align}
\Pr[\y|\x]=\langle\y|\frac{1}{2^m}\sum_{s\in\mathcal{S}}s|\y\rangle=\frac{1}{2^m}\sum_{g\in G}\langle \y|g|\y\rangle=\frac{1}{2^m}\sum_{g\in G}(-1)^{\phi(g)\cdot\tilde{\y}}.
\end{align}
Here, $\tilde{\y}$ is the $m+1$-dimensional vector obtained by appending $\y$ with a $1$. The summation above is nonzero if and only if $g_i\cdot\tilde{\y}=0$ for $i=1,\ldots,r$, in which case it takes the value $2^{r-m}$. Therefore, the output distribution for input $\x$ after the stabilizer circuit $\mathcal{C}(\x)$ is a uniform distribution over the $2^{m-r}$ possible output strings $\y$. That is, any possible output string $\y$ should satisfy a system of linear equations $C\y=b$, where $C$ is a $r\times m$ matrix, and $b$ is an $r$-dimensional vector for an appropriate $b$.

Now, we are ready to prove Theorem 2 in the main text. Since the relation is defined by a (classically-controlled) Clifford circuit, when the input $\x$ is fixed, the correct output is determined by a Clifford circuit. Thus we have a corresponding $C(\x)$. As $C(\x)$ is full row rank, we can find $r(\x)$ linearly independent columns of $C(\x)$. We denote the set of corresponding output bits $\y_S\subseteq\y$ for an output string $\y$. When the output bits outside $\y_S$ are fixed, only one possible assignment for the bits in $\y_S$ can satisfy $C(\x)\y=b(\x)$. Thus, when we are randomly guessing the output bits conditioned on the guesses of bits outside of $\y_S$, the probability of guessing the bits in $\y_S$ to form a correct output for the relation is $P_{\text{guess}}(\x)=2^{-r(\x)}$. Therefore, we have found that the classical random guessing algorithm which requires only constant circuit depth can output correctly with probability $2^{-r(\x)}$ given the input is $\x$. Average over all input $\x$, the average success probability is $P_{\text{guess}}=\frac{1}{2^n}\sum_{\x}2^{-r(\x)}$.

We then consider the performance of noisy quantum circuits with constant noise rates. Similar to the techniques we exploit in the previous section, we consider the effect of the depolarizing channels immediately before the measurement of bits in $S$ in the layer before the measurement. We move these depolarization channels to the same places after the measurement, yielding binary symmetric channels with crossover probability $2p/3$. Consider the $r(\x)$ binary symmetric channels conditioned on all the measurement results, only one flipping configuration will result in an output allowed by the relation. This configuration happens with probability
\begin{align}
P_{\text{noisy}}(\x)\leq\left(1-\frac{2p}{3}\right)^{r(\x)}=P_{\text{guess}}(\x)^{\log_2\frac{1}{1-\frac{2p}{3}}}. 
\end{align}
Since $0\leq p\leq\frac{3}{4}$, $0\leq\log_2\frac{1}{1-\frac{2p}{3}}\leq1$, average over all input $\x$, we get the average success probability
\begin{align}
P_{\text{noisy}}=\frac{1}{2^n}\sum_{\x}P_{\text{noisy}}(\x)\leq\frac{1}{2^n}\sum_{\x}P_{\text{guess}}(\x)^{\log_2\frac{1}{1-\frac{2p}{3}}}\leq\left(\frac{1}{2^n}\sum_{\x}P_{\text{guess}}(\x)\right)^{\log_2\frac{1}{1-\frac{2p}{3}}}=P_{\text{guess}}^{\log_2\frac{1}{1-\frac{2p}{3}}}.
\end{align}
By transferring this result into the context of supervised learning through a similar technique to Theorem 1, we complete the proof for Theorem 2 in the main text.
\end{document}